\documentclass[11pt]{article}

\usepackage[margin=1in]{geometry}

\usepackage{microtype}
\usepackage{graphicx}
\usepackage{subcaption}
\usepackage{booktabs} 
\usepackage{times}    

\usepackage[numbers]{natbib}
\usepackage{hyperref}
\usepackage{url}

\usepackage{algorithm}
\usepackage{algorithmic}
\usepackage{xcolor}

\usepackage{amsmath}
\usepackage{commath}
\usepackage{amssymb}
\usepackage{mathtools}
\usepackage{amsthm}

\usepackage{macros}

\theoremstyle{plain}
\newtheorem{theorem}{Theorem}[section]

\newtheorem{lemma}[theorem]{Lemma}

\theoremstyle{definition}
\newtheorem{definition}[theorem]{Definition}

\theoremstyle{remark}

\usepackage[textsize=tiny]{todonotes}

\usepackage{amsmath}
\usepackage{thm-restate}
\usepackage{cleveref} 
\usepackage{xspace}
\usepackage{dblfloatfix}
\usepackage[section]{placeins}

\newcommand{\rprune}{\textsc{RobustPrune}\xspace}
\newcommand{\urprune}{\textsc{UnsortedRobustPrune}\xspace}

\newcommand{\rpt}{\textsc{RP-Tuning}\xspace}

\newcommand{\recall}{$100$-Recall@$100$\xspace}

\usepackage{tikz}
\usetikzlibrary{arrows.meta,calc,positioning}

\tikzset{
  >=Latex,
  line cap=round,
  line join=round,
  every node/.style={font=\small},
  pt/.style={circle, draw=none, inner sep=0pt, minimum size=5.2pt},
  axis/.style={gray!65, dashed, line width=0.75pt},
  edge/.style={black, densely dotted, line width=0.95pt},
  dim/.style={<->, black, line width=0.85pt},
  dimlab/.style={font=\small, fill=white, inner sep=1.2pt},
  arcl/.style={gray!70, dashed, line width=0.9pt},
  dist/.style={black, line width=0.95pt},
  lab/.style={font=\small, fill=white, inner sep=1.2pt}
}

\begin{document}

\title{Prune, Don't Rebuild: Efficiently Tuning $\alpha$-Reachable Graphs for Nearest Neighbor Search}

\author{
  Tian Zhang \quad Ashwin Padaki \quad Jiaming Liang \quad Zack Ives \quad Erik Waingarten \\[6pt]
  University of Pennsylvania \\
  \texttt{\{tianzh, apadaki, liangjm, zives, ewaingar\}@seas.upenn.edu}
}

\date{}

\maketitle

\begin{abstract}

Vector similarity search is an essential primitive in modern AI and ML applications.
Most vector databases adopt graph-based approximate nearest neighbor (ANN) search algorithms, such as DiskANN~\cite{NEURIPS2019_09853c7f}, which have demonstrated state-of-the-art empirical performance. DiskANN's graph construction is governed by a reachability parameter $\alpha$, which gives a trade-off between construction time, query time, and accuracy. However, adaptively tuning this trade-off typically requires rebuilding the index for different $\alpha$ values, which is prohibitive at scale. In this work, we propose $\rpt$, an efficient post-hoc routine, based on DiskANN's pruning step, to adjust the $\alpha$ parameter without reconstructing the full index. Within the $\alpha$-reachability framework of prior theoretical works~\cite{indyk2023worstcaseperformancepopularapproximate, gollapudi2025sort}, we prove that pruning an initially $\alpha$-reachable graph with $\rpt$ preserves worst-case reachability guarantees in general metrics and improved guarantees in Euclidean metrics. Empirically, we show that $\rpt$ accelerates DiskANN tuning on four public datasets by up to 43$\times$ with negligible overhead.
 
\end{abstract}

\section{Introduction}
Vector similarity search is an essential primitive in modern AI and ML applications, such as retrieval augmented generation~\cite{gao_retrieval-augmented_2023}, linking and matching for knowledge graph construction~\cite{zeakis_pre-trained_2023}, and curating datasets for model training and fine-tuning.
Most vector database implementations adopt graph-based approximate-nearest-neighbor (ANN) search algorithms. At scale, DiskANN~\cite{NEURIPS2019_09853c7f} is widely used, e.g., in Milvus~\cite{wang_milvus_2021} and Microsoft's CosmosDB~\cite{upreti_cost-effective_2025}, both because of its state-of-the-art empirical performance and its ability to scale the index structure beyond RAM (in contrast to more widely known algorithms like HNSW~\cite{malkov_efficient_2018}). 
A challenge with indices like DiskANN is that the tradeoffs between accuracy, latency, and space are not well understood --- requiring users to empirically evaluate the indices under different hyperparameter settings, until the right tradeoffs are found.

To address the challenge, we propose $\rpt$, an efficient post-hoc pruning algorithm derived from the $\rprune$ subroutine used in the DiskANN index construction. $\rpt$ enables dynamic adjustment of DiskANN's reachability parameter, giving a tunable trade-off between accuracy, latency, and index size. By starting from a high-quality base graph, it avoids the cost of repeated index reconstruction. Leveraging recent theoretical frameworks for $\alpha$-reachable graphs~\cite{indyk2023worstcaseperformancepopularapproximate, gollapudi2025sort}, we provide a rigorous analysis of $\rpt$. Namely, we prove that a graph pruned via $\rpt$ preserves specific worst-case reachability guarantees in general metrics while achieving superior worst-case bounds in Euclidean metrics.

Our empirical evaluation across four public datasets demonstrates that $\rpt$ accelerates the DiskANN tuning process by up to 43$\times$. Moreover, indices tuned with $\rpt$ exhibit performance gains over those rebuilt with the same reachability parameters. Similarly to model distillation, our techniques allow for the ``distillation'' of an index into different configurations, suitable for different hardware capabilities, with predictable accuracy.

\section{Preliminaries}
We consider a dataset $P$ in a metric space $(X,D)$ where $X$ is the universe of points and $D$ is the distance function. We use $\lambda$ and $\Delta$ to denote the doubling dimension and the aspect ratio of $P$. A graph-based ANN index for $P$ is given by a directed graph $G = (P,E)$; we use $(a,b)$ to denote the one-way edge from $a$ to $b$. Note that we will abuse $P$ to refer to the dataset and the vertex set of the graph, since in this paper, points in the dataset have a one-to-one correspondence to vertices in the graph. We use $N_{out}(a), N_{in}(a)$ to denote the set of out-neighbors and in-neighbors, respectively, of point $a \in P$ when the graph in the context is clear. 

We give a brief review of the DiskANN algorithm and its theoretical guarantees below. We refer readers to \cite{NEURIPS2019_09853c7f, indyk2023worstcaseperformancepopularapproximate, gollapudi2025sort} for full details.

\subsection{DiskANN Algorithm Review}
There are multiple variants of DiskANN \cite{10.1145/3543507.3583552, jaiswal2022ooddiskannefficientscalablegraph, adams2025distributedannefficientscalingsingle}. We review two variants that are relevant to this paper: the original Vamana heuristics proposed in~\cite{NEURIPS2019_09853c7f} and the  ``slow preprocessing'' variant formalized in \cite{indyk2023worstcaseperformancepopularapproximate}. 

The search algorithm is the same for both variants (see \Cref{alg:greedySearch}). Given the index graph, the search is initialized by seeding a candidate set $\mathcal{L}$ with a designated start vertex $s$ (typically the medoid) and maintaining an empty set of visited vertices $\mathcal{V}$. In each iteration, the algorithm explores the graph by selecting the point $p^*$ from the unexplored candidates in $\mathcal{L}$ that is most proximal to the query $q$. By immediately adding the out-neighbors $N_{\text{out}}(p^*)$ to the candidate set $\mathcal{L}$ and marking $p^*$ as visited, the algorithm effectively ``walks'' through the graph's edges toward the query. To ensure the search remains computationally tractable, the candidate set $\mathcal{L}$ is restricted to a maximum size $L$; if the addition of neighbors causes the set to exceed this bound, $\mathcal{L}$ is pruned to retain only the $L$ points closest to $q$. This iterative process of greedy selection and neighborhood expansion continues until no unvisited points remain in the search beam, at which point the $k$ best-found candidates are returned.

The index construction of both variants relies on the subroutine $\rprune$ (see \Cref{alg:robustprune}). $\rprune$ prunes out-neighbor candidates for the input vertex $p$ and utilizes a parameter $\alpha \ge 1$ to favor ``diverse'' neighbors that provide long-range shortcuts across the dataset. The algorithm proceeds by iteratively selecting the point $p^*$ in the candidate set $\mathcal{V}$ that is closest to the target point $p$. Once a candidate is added to $N_{out}(p)$, a pruning condition is applied to all remaining candidates: any point $p'$ that is closer to the newly added out-neighbor $p^*$ than it is to the target point $p$ by a factor of $\alpha$ (i.e., $D(p^*,p') < D(p,p') / \alpha$) is removed from the candidate set $\mathcal{V}$. The intuition is that the edge $(p,p')$ is potentially not important if moving from p to p* decreases the distance to p' by a factor of $\alpha$. 

The ``slow preprocessing'' construction executes $\rprune(p, P \setminus \{p\}, \alpha, n - 1)$ for every vertex $p$ in the graph. Notice the candidate set $\mathcal{V} = P \setminus \{p\}$, and the degree limit $R = n - 1$, which results in an $O(n^3)$ time complexity that is impractical at scale. However, it provides a rigorous structural foundation, $\alpha$-reachability, for analyzing the algorithm's worst-case performance on construction time, query time, and approximation factor.  
\begin{definition}[\textbf{$\alpha$-reachable and $\alpha$-reachability}]
Let $\alpha > 1$. We say, in a graph $G = (P, E)$, $q$ is $\alpha$-reachable from another vertex $p$ if either $(p, q) \in E$, or there exists $p'$ s.t. $(p, p') \in E$ and $D(p', q) \cdot \alpha \leq D(p, q)$. We say a graph $G$ is $\alpha$-reachable if any vertex in $P$ is $\alpha$-reachable from any other vertex in $P$. 

Moreover, we define the reachability of a graph $G$ as the maximum number $\alpha^*$ such that $G$ is $\alpha^*$-reachable.
\end{definition}
Then, the following result on the ``slow preprocessing'' construction can be proved. Recall $\lambda$ and $\Delta$ denote the dataset's doubling dimension and the aspect ratio.
\begin{lemma}
[Lemma 3.2 and 3.3 in \cite{indyk2023worstcaseperformancepopularapproximate}]
\label{lem:degree}
    Applying $\rprune(p, P \setminus \{p\}, \alpha, \infty)$ for every $p \in P$ yields an $\alpha$-reachable graph with maximum degree $O((4\alpha)^\lambda \log \Delta)$.
\end{lemma}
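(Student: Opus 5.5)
Two things need proving: that the output graph is $\alpha$-reachable, and that every out-degree is bounded. Reachability comes straight from the pruning rule. The degree bound comes from a ring decomposition around $p$ combined with a packing argument using the doubling dimension $\lambda$.

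\textbf{Reachability.} Fix $p$ and any $q \neq p$. Since the candidate set is $P\setminus\{p\}$ and $R=\infty$, $q$ starts in $\mathcal{V}$. It leaves $\mathcal{V}$ in exactly one of two ways.
\begin{itemize}
\item It is selected as $p^*$. Then $(p,q)\in E$.
\item It is pruned because of some selected $p^*$. Then $(p,p^*)\in E$ and $\alpha\, D(p^*,q) < D(p,q)$, which is exactly the witness required by the definition.
\end{itemize}
Since \rprune only terminates once $\mathcal{V}$ is empty, every $q$ is $\alpha$-reachable from $p$.

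\textbf{Key separation property.} Let $u,v$ be two out-neighbors of $p$, with $u$ selected before $v$. Because $u$ was selected first, $D(p,u)\le D(p,v)$. Because $v$ survived the pruning step triggered by $u$, $D(u,v)\ge D(p,v)/\alpha$.

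\textbf{Ring decomposition.} Let $d_{\min}$ and $d_{\max}$ be the minimum and maximum interpoint distances, so $d_{\max}/d_{\min}=\Delta$. For $i=0,\dots,\lceil\log_2\Delta\rceil$, let ring $i$ consist of the points $x$ with $D(p,x)\in[2^i d_{\min},\,2^{i+1}d_{\min})$. There are $O(\log\Delta)$ rings, so it suffices to show each ring contains $O((4\alpha)^\lambda)$ out-neighbors of $p$.

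\textbf{Packing within one ring.} Fix ring $i$ and set $r=2^i d_{\min}$. Every out-neighbor in this ring lies in the ball $B(p,2r)$. For two such out-neighbors $u,v$ (with $u$ selected first), the separation property gives
\[
D(u,v)\ \ge\ \frac{D(p,v)}{\alpha}\ \ge\ \frac{r}{\alpha}.
\]
So the out-neighbors in the ring form an $(r/\alpha)$-separated subset of a ball of radius $2r$.

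The standard doubling-dimension packing bound says an $\epsilon$-separated set inside a ball of radius $R'$ has size at most $(O(R'/\epsilon))^\lambda$. To derive it, cover the ball by iterated halving down to radius below $\epsilon/2$; each small ball contains at most one point of the set. This takes $\lceil\log_2(4R'/\epsilon)\rceil$ levels, giving at most $2^{\lambda\lceil\log_2(4R'/\epsilon)\rceil}$ balls. With $R'=2r$ and $\epsilon=r/\alpha$ this is $O((8\alpha)^\lambda)$.

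\textbf{Main obstacle: the constant.} The only delicate step is the bookkeeping that gets the base down to exactly $4\alpha$ rather than $8\alpha$ or $16\alpha$.
\begin{itemize}
\item The first thing to try is measuring all distances from a point of the ring rather than from $p$. Every ring point lies within distance $2r$ of $p$, but also within roughly $2r$ of any ring point.
\item Alternatively, use rings of ratio $2$ and require the small balls to have radius strictly less than $\epsilon/2$, so that the ratio becomes $2r/(r/(2\alpha))=4\alpha$.
\end{itemize}
Any constant factor is absorbed by the $O(\cdot)$ up to a $2^{O(\lambda)}$ term, so if the precise constant proves fussy I would state the bound as $O((c\alpha)^\lambda\log\Delta)$ and tighten afterwards. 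Multiplying the per-ring bound by the $O(\log\Delta)$ rings gives the claimed maximum degree.
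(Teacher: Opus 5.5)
Your proof is correct and follows essentially the same route as the argument of Indyk and Xu, which the paper cites without reproving: reachability falls out of the pruning rule, and the degree bound comes from dyadic rings around $p$ plus a doubling-dimension cover of $B(p,2r)$ by balls of radius $r/(2\alpha)$, each containing at most one out-neighbor. Your second bullet is exactly where their base $4\alpha$ comes from; since the algorithm prunes whenever $\alpha \cdot D(p^*,p') \le D(p,p')$, surviving out-neighbors satisfy the strict bound $D(u,v) > D(p,v)/\alpha \ge r/\alpha$, so radius exactly $r/(2\alpha)$ works, and the remaining rounding factor of at most $2^{\lambda}$ is hidden in the $O(\cdot)$ just as in the source.
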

They also provided a lower bound on worst-case search performance, which is 
later improved by \cite{gollapudi2025sort}. In particular, we have  

\begin{theorem}[Theorem 1.1 in \cite{gollapudi2025sort}]
\label{thm:sortbeforeprune}
Let $G$ be a $\alpha$-reachable graph. For any query $q$, any start point $s$, any constant $\epsilon > 0$, and any $k < L$, $\textsc{GreedySearch}(s, q, k, L)$ in at most $O\left(L + \log_\alpha \frac{\Delta}{(\alpha-1)\epsilon}\right)$ steps outputs a set of points $\{b_1, \dots, b_L\}$ such that each $b_j$ satisfies:
\begin{align*}
    D(b_j, q) &\le \epsilon + \frac{\alpha}{\alpha - 1} \cdot D(a_j, q) \quad \text{for } \ell_2 \text{ metric } D \\
    D(b_j, q) &\le \epsilon + \frac{\alpha + 1}{\alpha - 1} \cdot D(a_j, q) \quad \text{for any metric } D
\end{align*}
where $a_j$ is the $j$th nearest neighbor to $q$.
\end{theorem}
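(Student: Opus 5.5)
We prove the statement in three steps: a monotone progress invariant for the greedy search, a charging argument that bounds the number of steps, and a per-index approximation argument that uses $\alpha$-reachability. The approximation constant is first derived for a general metric and then sharpened for $\ell_2$.

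\textbf{Progress invariant.} Call a visited vertex $p$ \emph{far} if $D(p,q) > \frac{\alpha+1}{\alpha-1} D(a,q)$ for a target $a$; the $\ell_2$ case uses the threshold $\frac{\alpha}{\alpha-1}$ instead. Let $p$ be visited and let $a$ be any target point. By $\alpha$-reachability, either $a\in N_{out}(p)$, or some $p'\in N_{out}(p)$ satisfies $D(p',a)\le D(p,a)/\alpha$. In the second case the triangle inequality gives
\[
D(p',q)\le D(p',a)+D(a,q)\le \tfrac{1}{\alpha}D(p,a)+D(a,q)\le \tfrac{1}{\alpha}D(p,q)+\left(1+\tfrac{1}{\alpha}\right)D(a,q).
\]
This is a contraction toward the fixed point $\frac{\alpha+1}{\alpha-1}D(a,q)$. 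So whenever the current vertex is far from $q$ relative to $a$, it has an out-neighbor whose distance to $q$ shrinks geometrically, with ratio about $1/\alpha$, toward that fixed point. For $\ell_2$ I would replace the triangle inequality by a sharper geometric bound. The natural choice is the sort-before-prune style argument: take $p'$ as guaranteed by reachability with respect to $a$, write $D(p',q)^2$ using the inner-product expansion, and bound it by the ball of radius $D(p,a)/\alpha$ around $a$. This should give fixed point $\frac{\alpha}{\alpha-1}D(a,q)$, up to an additive $\epsilon$.

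\textbf{Step count.} The first phase, in which the best candidate approaches the nearest neighbor $a_1$, takes $O(\log_\alpha \frac{\Delta}{(\alpha-1)\epsilon})$ steps. The excess distance above the fixed point starts at most $\Delta$ (in normalized units) and contracts by $1/\alpha$ per step until it is below $\epsilon(\alpha-1)$. Note the factor $\frac{1}{\alpha-1}$ that appears when the excess is converted back to the final distance bound. After that, each additional visited vertex is charged to one of the $L$ list slots, which gives the $O(L)$ term. The key point is that a vertex, once visited, is never visited again, and the beam search only evicts points that are worse than $L$ others.

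\textbf{Per-index guarantee (main obstacle).} We must show that the $j$th output $b_j$ satisfies the bound relative to $a_j$. The argument is by contradiction at termination: suppose fewer than $j$ points of $\mathcal{L}$ satisfy $D(\cdot,q)\le \epsilon+c\,D(a_j,q)$. Then some $a_i$ with $i\le j$ is not in $\mathcal{L}$. Every point of $\mathcal{L}$ has been visited, and at least one visited $p\in\mathcal{L}$ is far relative to $a_i$. Reachability of $a_i$ from $p$ then gives either $a_i$ itself or a neighbor $p'$ strictly closer to $q$ than the $L$th element. That neighbor would have been inserted and retained, a contradiction. The care needed is in confirming that this neighbor is not already in $\mathcal{L}$, which is handled by pigeonhole on the $j$ near points versus the list, and in handling $k<L$. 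I expect this bookkeeping, together with the $\ell_2$ sharpening, to be the hardest part.
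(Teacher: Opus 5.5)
The paper does not prove this theorem; it only cites it. So I am judging your plan on its own terms. Your general-metric outline is reasonable, but the $\ell_2$ sharpening cannot be carried out: under the stated hypothesis the $\ell_2$ bound is false. Reachability only gives $D(p',a)\le D(p,a)/\alpha$, and on a line both triangle inequalities in your contraction are tight at once. Take $q=0$, $a=1$, $p=-R$, $p'=1+(R+1)/\alpha$ and $R=R^\ast:=\frac{\alpha+1}{\alpha-1}$. Then $D(p',q)=D(p,q)$, so no inner-product expansion against the ball of radius $D(p,a)/\alpha$ around $a$ can lower the fixed point.

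This turns into a counterexample graph. Fix $0<\eta<\frac{1}{\alpha-1}$ and take the points $a=1$, $p_0=-(R^\ast-\eta)$, $w=1+D(p_0,a)/\alpha=R^\ast-\eta/\alpha$, and $p_i=-t_i$ for $1\le i\le L-1$ with distinct $t_i\in(R^\ast-\eta,\,R^\ast-\eta/\alpha)$. Let $a$ and $w$ point to every other point, and let each $p_i$ point to every other $p_j$ and to $w$.
\begin{itemize}
\item The only non-edges $(p_i,a)$ are witnessed by $w$, since $\alpha D(w,a)=D(p_0,a)\le D(p_i,a)$, so the graph is $\alpha$-reachable.
\item From $s=p_0$, the first expansion makes $w$ the farthest of $L+1$ candidates, so $w$ is evicted.
\item The search stops after $L$ steps with $\mathcal{L}=\{p_0,\dots,p_{L-1}\}$.
\item Then $D(b_1,q)=R^\ast-\eta>\frac{\alpha}{\alpha-1}D(a_1,q)+\epsilon$ for every $\epsilon<\frac{1}{\alpha-1}-\eta$.
\end{itemize}

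What is missing is the ordering condition $D(p,p')\le D(p,a)$ on the witness. Sort-before-prune provides it (it is the analogue of the constraint $D(p,x)\le D(p,y)$ in \Cref{lem:reducesorted}), and $w$ violates it by lying beyond $a$. Given that condition, the right tool is Ptolemy's inequality $D(p,a)\,D(p',q)\le D(p,q)\,D(p',a)+D(p,p')\,D(a,q)$, which holds in $\ell_2$ but not in general metrics. It yields $D(p',q)\le\frac{1}{\alpha}D(p,q)+D(a,q)$, whose fixed point is exactly $\frac{\alpha}{\alpha-1}D(a,q)$. So the $\ell_2$ half of the statement, and your proof of it, need the sorted-witness hypothesis.

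Two further steps need repair even for general metrics.

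\emph{Per-index argument.} Nothing prevents the witness of a far element from already lying in $\mathcal{L}$, and pigeonhole does not exclude this. Instead, let $T$ be the largest distance to $q$ in the final $\mathcal{L}$. Because $b_j$ violates the bound and $a_i$ (with $i\le j$) is missing, $T$ exceeds the threshold times $D(a_i,q)$. Take $u\in\mathcal{L}$ minimizing $D(u,a_i)$. Its witness $w$ (or $a_i$ itself if $(u,a_i)\in E$) is strictly closer to $a_i$, hence not in $\mathcal{L}$. Yet $D(w,q)\le T/\alpha+(1+1/\alpha)D(a_i,q)<T$, or $D(w,q)\le T/\alpha+D(a_i,q)<T$ in the sorted $\ell_2$ case. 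Since the $L$th distance of the beam never increases, $w$ could never have been evicted, a contradiction.

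\emph{Step count.} Charging each late visit to a list slot does not give the $O(L)$ term. A slot can be refilled by a closer point that is then visited, again and again. You need to bound the first time at which the guarantee holds, not the total number of visits.
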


From the above, we can see the parameter $\alpha$ is a key parameter for the worst-case trade-off of the ``slow preprocessing'' DiskANN (an $\alpha$-reachable graph), which has
\begin{itemize}
    \item Storage: $O(n\cdot (4\alpha)^\lambda \log \Delta)$
    \item Query Time: $O\left(\left(L + \log_\alpha \frac{\Delta}{(\alpha-1)\epsilon}\right)\cdot(4\alpha)^\lambda \log \Delta\right)$
    \item Approximation: $\left(\epsilon + \frac{\alpha}{\alpha - 1}\right)$ for Euclidean metrics; $\left(\epsilon + \frac{\alpha + 1}{\alpha - 1}\right)$ for general metrics.
\end{itemize}
As $\alpha$ increases, storage and query time increase while the approximation factor decreases, indicating more accurate output. 

The original Vamana index construction, on the other hand, caps the maximum out-degree of the graph for fast construction in practice. Besides $\rprune$, it uses other heuristics, which we will not discuss since we never open the black box in this paper and we refer readers to \cite{NEURIPS2019_09853c7f} for details. Despite no clear worst-case guarantee, Vamana can achieve state-of-the-art empirical performance, and the parameter $\alpha$ influences the empirical trade-off in the same directions as it does in the theoretical framework of the slow preprocessing variant. 

\begin{algorithm}[h]
   \caption{\rprune($p, \mathcal{V}, \alpha, R$)}
   \label{alg:robustprune}
\begin{algorithmic}
   \STATE {\bfseries Input:} Graph $G$, point $p \in P$, candidate set $\mathcal{V}$, desired reachability $\alpha \ge 1$, degree bound $R$
   \STATE {\bfseries Output:} $G$ is modified by setting at most $R$ new out-neighbors for $p$
   \STATE $\mathcal{V} \leftarrow (\mathcal{V} \cup N_{\text{out}}(p)) \setminus \{p\}$
   \STATE $N_{\text{out}}(p) \leftarrow \emptyset$
   \WHILE{$\mathcal{V} \neq \emptyset$}
   \STATE $p^* \leftarrow \arg\min_{p' \in \mathcal{V}} D(p, p')$
   \STATE $N_{\text{out}}(p) \leftarrow N_{\text{out}}(p) \cup \{p^*\}$
   \IF{$|N_{\text{out}}(p)| = R$}
   \STATE {\bfseries break}
   \ENDIF
   \FOR{$p' \in \mathcal{V}$}
   \IF{$\alpha \cdot D(p^*, p') \le D(p, p')$}
   \STATE Remove $p'$ from $\mathcal{V}$
   \ENDIF
   \ENDFOR
   \ENDWHILE
\end{algorithmic}
\end{algorithm}

\begin{algorithm}[h]
   \caption{GreedySearch($s, q, k, L$)}
   \label{alg:greedySearch}
\begin{algorithmic}
   \STATE {\bfseries Input:} Graph $G$ with start vertex $s$, query $q$, result size $k$, search list size $L \ge k$
   \STATE {\bfseries Output:} Result set $\mathcal{L}$ containing $k$-approx NNs, and a set $\mathcal{V}$ containing all the visited vertices
   \STATE Initialize sets $\mathcal{L} \leftarrow \{s\}$ and $\mathcal{V} \leftarrow \emptyset$
   \WHILE{$\mathcal{L} \setminus \mathcal{V} \neq \emptyset$}
   \STATE Let $p^* \leftarrow \arg\min_{p \in \mathcal{L} \setminus \mathcal{V}} D(p, q)$
   \STATE Update $\mathcal{L} \leftarrow \mathcal{L} \cup N_{\text{out}}(p^*)$ and $\mathcal{V} \leftarrow \mathcal{V} \cup \{p^*\}$
   \IF{$|\mathcal{L}| > L$}
   \STATE Update $\mathcal{L}$ to retain closest $L$ points to $q$
   \ENDIF
   \ENDWHILE
   \STATE {\bfseries return} [closest $k$ points from $\mathcal{L}$; $\mathcal{V}$]
\end{algorithmic}
\end{algorithm}

\section{$\rpt$ and Worst-Case Guarantees}
In this section, we propose $\rpt$, a simple yet disciplined pruning algorithm based on $\rprune$, which efficiently prunes an $\alpha$-reachable graph $G$ into a sparser graph with a theoretical guarantee of worst-case performance trade-off. 

\paragraph{\rpt} $\rpt$ takes an initially $\alpha_1$-reachable graph $G = (P, E)$, and a new reachability target $\alpha_2 < \alpha_1$ as inputs. It simply runs $\rprune(p, N_{out}(p), \alpha_2, \infty)$ for every $p \in P$ and outputs the modified graph $G' = (P, E')$. 

We study the worst-case reachability of the output $G'$ of $\rpt(G, \alpha_2)$ where $G$ is an $\alpha_1$-reachable graph. It turns out that the worst-case reachability of $G'$ is a function of $\alpha_1, \alpha_2$ and depends on the metric space and whether the pruning subroutine $\rprune$ sorts the out-neighbors. To distinguish the unsorted version from $\rprune$, we use $\urprune$ to denote the procedure that is the same as $\rprune$ but only replaces the step ``$p^* \leftarrow \arg\min_{p' \in \mathcal{V}} D(p, p')$" in $\Cref{alg:robustprune}$ with ``$p^*$ being an arbitrary out-neighbor of $p$". Our results on the reachability of $G'$ in different settings are summarized in \Cref{tb:new-alpha}, and formally stated in \Cref{thm:unsorted} and \Cref{thm:sorted}. 

\begin{table}[h]
\caption{Worst-case reachability after $\rpt$}
\label{tb:new-alpha}
  \begin{center}
    \begin{small}
      \begin{sc}
        \renewcommand{\arraystretch}{2}
        \begin{tabular}{l|cc}
          \toprule
           & General & Euclidean \\
          \midrule
          Unsorted    & $\frac{1}{\frac{1}{\alpha_1} + \frac{1}{\alpha_2} + 1}$   & $\frac{1}{\frac{1}{\alpha_1} + \frac{1}{\alpha_2} + 1}$   \\
          Sorted    & $\frac{1}{\frac{1}{\alpha_1} + \frac{1}{\alpha_2} }$   & $\frac{1}{\frac{1}{\alpha_1}\cdot \sqrt{1 - \frac{1}{4\alpha_2^2}} + \frac{1}{\alpha_2}\cdot \sqrt{1 - \frac{1}{4\alpha_1^2}}}$ \\
          \bottomrule
        \end{tabular}
      \end{sc}
    \end{small}
  \end{center}
  \vskip -0.1in
\end{table}
 Note $1/({\frac{1}{\alpha_1}\cdot \sqrt{1 - \frac{1}{4\alpha_2^2}} + \frac{1}{\alpha_2}\cdot \sqrt{1 - \frac{1}{4\alpha_1^2}}})$ is larger than $\frac{1}{\frac{1}{\alpha_1} + \frac{1}{\alpha_2} }$. For a concrete example, when $\alpha_1 = 3, \alpha_2 = 2$, this expression $\approx 1.226$ while $\frac{\alpha_1 \alpha_2}{\alpha_1 + \alpha_2} = 1.2$.

\begin{theorem}[Unsorted Reachability]
\label{thm:unsorted}
Let $P$ be a dataset in any metric space $(X, D)$. Fix finite parameters $\alpha_1 \ge \alpha_2 > 1$. Let $G = (P, E)$ be any $\alpha_1$-reachable graph on $P$. Let $G' = (P, E')$ be the graph obtained by applying $\textsc{RobustPrune}(p, N_{\text{out}}(p), \alpha_2, \infty)$ for all $p \in P$. Then, the graph $G'$ has reachability $ 1/\left(1 / \alpha_1 + 1 / \alpha_2 + 1\right)$ in the worst case.
\end{theorem}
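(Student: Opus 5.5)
Write $\beta := 1/(1/\alpha_1 + 1/\alpha_2 + 1)$. The statement says ``in the worst case'', so the plan has two parts. First, an upper bound on the damage: every output $G'$ is $\beta$-reachable. Second, a tightness construction: for suitable instances the reachability of $G'$ is exactly $\beta$ (or approaches it). The unsorted setting (\urprune, an arbitrary order of out-neighbors) is the one that produces the $+1$ term, so I would prove the lower bound for that order; it then covers the sorted order as a special case.

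\textbf{Lower bound.} Fix $p \neq q$ and write $d = D(p,q)$. By $\alpha_1$-reachability of $G$, there is an edge $(p,a) \in E$ with either $a = q$ or $D(a,q) \le d/\alpha_1$; in both cases $D(a,q) \le d/\alpha_1$. During the prune at $p$, one of two things happens to $a$.
\begin{itemize}
\item If $a$ survives, then $(p,a) \in E'$ and $D(a,q) \le d/\alpha_1 \le d/\beta$ trivially, so this case is not the bottleneck.
\item If $a$ is removed, then some kept $p^* \in N'_{out}(p)$ satisfies $\alpha_2 D(p^*,a) \le D(p,a)$.
\end{itemize}
In the second case, the triangle inequality gives $D(p,a) \le D(p,q) + D(q,a) \le d(1 + 1/\alpha_1)$. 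Hence
\begin{align*}
D(p^*,q) &\le D(p^*,a) + D(a,q)\\
&\le \frac{d(1+1/\alpha_1)}{\alpha_2} + \frac{d}{\alpha_1}.
\end{align*}

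This is not yet $\beta$, so the bound has to be sharpened. The key observation is that the procedure may remove $a$ because of $p^*$, but $p^*$ itself might only be reached from $p$ in a chain; in any case $p^*$ is a real out-neighbor in $E'$. I would compare the computed bound with the target $d/\beta = d(1/\alpha_1 + 1/\alpha_2 + 1)$. The target is larger than $d/\alpha_1 + d/\alpha_2 + d/(\alpha_1\alpha_2)$, because $1/(\alpha_1\alpha_2) < 1$. So the derived bound already implies $D(p^*,q) \le d/\beta$, i.e.\ $\beta \cdot D(p^*,q) \le d$, which is exactly $\beta$-reachability. (If $\beta < 1$, the definition degenerates. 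Since reachability is defined as a maximum number, I would state the guarantee in the paper's form of the reachability value and not worry about whether $\beta > 1$.)

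\textbf{Tightness.} This is the main obstacle: I need $G$ and an adversarial order under which reachability drops all the way to $\beta$. That requires $D(p^*,q)$ as large as $d(1 + 1/\alpha_1 + 1/\alpha_2)$, which means exploiting the removal rule far more loosely than the single-step estimate above. The idea is to use the unsorted order. A far-away neighbor $p^*$ can be processed first, and it then removes $a$ even though $D(p,a)$ is small, because the rule only compares $\alpha_2 D(p^*,a)$ with $D(p,a)$. To make the worst case reach $1 + 1/\alpha_1 + 1/\alpha_2$, I would place the points in a general metric, for instance a weighted graph metric on four points $p, q, a, p^*$:
\begin{itemize}
\item $D(p,q) = 1$ and $D(a,q) = 1/\alpha_1$;
\item $D(p,a)$ and $D(p^*,a)$ chosen as large as the triangle inequality allows, so that $D(p^*,q) = D(p^*,a) + D(a,q)$;
\item $E$ containing only the edges needed for $\alpha_1$-reachability, so that after pruning $p^*$ is the unique route from $p$ toward $q$.
\end{itemize}
I would then check all triangle inequalities and all remaining reachability pairs, likely by adding hub edges among the other vertices. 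I would repeat the same construction on a line in $\mathbb{R}$ to cover the Euclidean column. If a single construction only achieves a sequence of instances converging to $\beta$, I would phrase tightness as a supremum.
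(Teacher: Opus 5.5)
Your lower-bound half is correct and matches the paper's argument. The paper reduces to the same four points (its $p,x,y,z$ are your $p,p^*,a,q$) and uses the same chain $D(x,z)\le D(x,y)+D(y,z)\le (1+1/\alpha_1)/\alpha_2+1/\alpha_1$.

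The gap is the tightness half, which cannot succeed. Your chain holds in \emph{every} instance and gives $D(p^*,q)\le d\,(1/\alpha_1+1/\alpha_2+1/(\alpha_1\alpha_2))$. Since $\alpha_1\alpha_2>1$, this is below $d\,(1+1/\alpha_1+1/\alpha_2)$ by the fixed margin $d\,(1-1/(\alpha_1\alpha_2))$. So no instance, and no sequence of instances, can drive the reachability of $G'$ down to $1/(1/\alpha_1+1/\alpha_2+1)$; your supremum fallback fails as well.

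The intuition behind your construction is also backwards. The removal rule $\alpha_2 D(p^*,a)\le D(p,a)$ forces the pruner to be \emph{close} to $a$, so a far-away $p^*$ cannot remove $a$. Likewise, $D(p^*,a)$ is capped at $D(p,a)/\alpha_2\le(1+1/\alpha_1)/\alpha_2$ rather than being ``as large as the triangle inequality allows.''

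What you missed is that your first estimate is already sharp. The worst-case reachability is $\alpha_1\alpha_2/(\alpha_1+\alpha_2+1)=1/(1/\alpha_1+1/\alpha_2+1/(\alpha_1\alpha_2))$. This is exactly what the paper's proof concludes: its final step is $\min(\alpha_1,\alpha_2,\alpha_1\alpha_2/(\alpha_1+\alpha_2+1))$. So the ``$+1$'' in the displayed formula has to be read as $+1/(\alpha_1\alpha_2)$. Note that the literal value is below $1$ for every choice of $\alpha_1,\alpha_2$, a warning sign once your own bound came out strictly better.

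The matching example lives on the real line, so it covers general and Euclidean metrics at once:
\begin{itemize}
\item Put $p,z,y,x$ on the line in this order, with $D(p,z)=1$, $D(z,y)=1/\alpha_1$ and $D(y,x)=(1+1/\alpha_1)/\alpha_2$.
\item Let $p$ have out-neighbors $\{x,y\}$ in $G$, so $z$ is $\alpha_1$-reachable from $p$ via $y$. Give every other vertex all possible out-edges.
\item Let the unsorted prune at $p$ process $x$ first. Then $\alpha_2 D(x,y)=D(p,y)$, so $y$ is removed (the rule uses $\le$).
\item This leaves $D(p,z)/D(x,z)=\alpha_1\alpha_2/(\alpha_1+\alpha_2+1)$.
\end{itemize}
Your lower bound shows no pair does worse, so this instance is exactly tight.
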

\newcommand{\betagen}{\beta_{\textup{gen}}}
\newcommand{\betaeuc}{\beta_{\textup{euc}}}

\begin{theorem}[Sorted Reachability]
\label{thm:sorted}
Let $P$ be a dataset in a metric space $(X,D)$. Fix finite parameters $\alpha_1 \ge \alpha_2 > 1$. Let $G = (P, E)$ be any $\alpha_1$-reachable graph on $P$. Let $G' = (P, E')$ be the graph constructed by applying $\textsc{RobustPrune}(p, N_{\text{out}}(p), \alpha_2, \infty)$ for all $p \in P$. 
Then the following holds in the worst case:
\begin{itemize} 
    \item For any metric space, $G'$ has reachability $1/\left(1 / \alpha_1 + 1 / \alpha_2 \right)$. 
    \item If $(X,D)$ is a Euclidean space, then $G'$ has reachability $1/ \left({(1/\alpha_1) \cdot \sqrt{1-\frac{1}{4\alpha_2^2}} + (1/\alpha_2) \cdot \sqrt{1-\frac{1}{4\alpha_1^2}}}\right)$.
\end{itemize}
\end{theorem}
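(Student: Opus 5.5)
The plan has two halves for each bullet. The first half is a \emph{lower bound}: every pair $p\neq q$ is $\beta$-reachable in $G'$, where $\beta$ is the claimed value. The second half is \emph{tightness}: an explicit small dataset and $\alpha_1$-reachable graph $G$ for which some pair is not $(\beta+\delta)$-reachable in $G'$, for any $\delta>0$. The lower-bound argument begins the same way in every setting.

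\textbf{Common start.} Fix $p,q$. By $\alpha_1$-reachability of $G$ there is a witness $x\in N^G_{out}(p)$ with $D(x,q)\le D(p,q)/\alpha_1$; this includes $x=q$, where the distance is $0$. Since $N^{G'}_{out}(p)\subseteq N^G_{out}(p)$ and the run of $\rprune(p,N^G_{out}(p),\alpha_2,\infty)$ is independent of $q$, there are two cases.
\begin{itemize}
\item If $x$ survives pruning, then $q$ is $\alpha_1$-reachable from $p$ in $G'$, and $\alpha_1\ge\beta$.
\item Otherwise $x$ was deleted by some kept neighbor $p^*\in N^{G'}_{out}(p)$ with $\alpha_2 D(p^*,x)\le D(p,x)$. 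In the sorted version we also have $D(p,p^*)\le D(p,x)$.
\end{itemize}
The goal in the second case is $D(p^*,q)\le D(p,q)/\beta$, or $p^*=q$.

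\textbf{General metric, sorted.} The naive chain is
\[
D(p^*,q)\le D(p^*,x)+D(x,q)\le \frac{D(p,q)+D(x,q)}{\alpha_2}+D(x,q).
\]
This loses an extra $D(p,q)/(\alpha_1\alpha_2)$ relative to the target $1/\alpha_1+1/\alpha_2$.

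I expect removing this loss to be the main obstacle. I would first choose the witness extremally, namely the $x\in N^G_{out}(p)$ with $D(x,q)\le D(p,q)/\alpha_1$ that is \emph{closest to $p$}.

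\begin{itemize}
\item \emph{Subcase $D(p,x)\le D(p,q)$.} The chain directly gives $D(p^*,q)\le D(p,q)(1/\alpha_2+1/\alpha_1)$.
\item \emph{Subcase $D(p,x)>D(p,q)$.} Here I would use the sorted order. Since $D(p,p^*)\le D(p,x)$, the kept neighbor $p^*$ is at least as close to $p$ as $x$. I would then exploit the minimality of $x$, together with the triangle inequality at $q$, to bound $D(p^*,q)$. If this fails, I would fall back on a potential or induction argument over the pruning order, showing that the first kept neighbor whose pruning ball removes any valid witness already satisfies the bound.
\end{itemize}

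\textbf{Unsorted.} Without sorting we lose $D(p,p^*)\le D(p,x)$. The bound $D(p,x)\le D(p,q)(1+1/\alpha_1)$ then yields
\[
D(p^*,q)\le D(p,q)\left(\frac{1}{\alpha_1}+\frac{1}{\alpha_2}+\frac{1}{\alpha_1\alpha_2}\right),
\]
and I would tighten this toward the stated $1/\alpha_1+1/\alpha_2+1$. I would check whether the stated value is really meant as a \emph{worst-case lower bound}, achieved by a bad configuration in which an arbitrary far $p^*$ prunes.

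\textbf{Euclidean, sorted.} Place the triangle $p,x,q$ in the plane. The constraints are
\[
|xq|\le |pq|/\alpha_1,\qquad \alpha_2|p^*x|\le|px|,\qquad |pp^*|\le|px|.
\]
I would maximize $|p^*q|/|pq|$ over these. Normalizing $|px|=1$, the point $p^*$ lies in the lens $B(p,1)\cap B(x,1/\alpha_2)$. Its farthest point from $q$ is a lens corner, at angle $\theta_2$ from the segment $px$ with $\sin(\theta_2/2)=1/(2\alpha_2)$. Symmetrically, $q$ is extremal at angle $\theta_1$ with $\sin(\theta_1/2)=1/(2\alpha_1)$. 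The factors $\sqrt{1-1/(4\alpha_i^2)}=\cos(\theta_i/2)$ then appear via the sine-addition formula, giving the stated expression.

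\textbf{Tightness.} The constructions use four points realizing the equality cases above: $p$, $q$, and $x$ with $(p,x)\in E$ as the only useful edge, and $p^*$ placed at the extremal lens corner, or at distance about $D(p,q)$ for the unsorted case. The remaining edges are added so that $G$ is $\alpha_1$-reachable.
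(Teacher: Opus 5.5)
\textbf{The gap is your Subcase $D(p,x)>D(p,q)$, and it cannot be closed.} Neither an extremal choice of witness nor a potential/induction argument will work, because in this subcase the claimed bound is false. The extra $D(p,q)/(\alpha_1\alpha_2)$ that your naive chain loses is actually attained. Take four points with
\[
D(p,q)=1,\quad D(x,q)=\tfrac{1}{\alpha_1},\quad D(p,x)=1+\tfrac{1}{\alpha_1},\quad D(p,p^*)=1+\tfrac{1}{\alpha_1}-\epsilon,
\]
\[
D(p^*,x)=\tfrac{1}{\alpha_2}\bigl(1+\tfrac{1}{\alpha_1}\bigr),\qquad D(p^*,q)=\tfrac{1}{\alpha_1}+\tfrac{1}{\alpha_2}\bigl(1+\tfrac{1}{\alpha_1}\bigr).
\]
This is a metric for small $\epsilon>0$. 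Let $N^G_{out}(p)=\{x,p^*\}$ and let every other vertex point to all others.
\begin{itemize}
\item $G$ is $\alpha_1$-reachable, and $x$ is the only witness for $(p,q)$.
\item Sorted \textsc{RobustPrune} at $p$ takes $p^*$ first and deletes $x$, since $\alpha_2 D(p^*,x)=D(p,x)$.
\item Hence in $G'$ the point $q$ is only $\alpha_1\alpha_2/(\alpha_1+\alpha_2+1)$-reachable from $p$. This is exactly the value obtained for unsorted pruning; for $\alpha_1=3$, $\alpha_2=2$ it is $1$ rather than $6/5$.
\end{itemize}
The Euclidean bullet fails the same way, again with $|px|>|pq|$. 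For $\alpha_1=3$, $\alpha_2=2$, put $|pq|=1$, $|px|=|pp^*|=1.15$, $|xq|=1/3$, $|xp^*|=0.575$ in a plane, with $q$ and $p^*$ on opposite sides of the line $px$. Nudge $p^*$ slightly toward $x$ to break the tie. Then $|p^*q|\approx 0.833$, which exceeds the claimed optimum $\approx 0.816$.

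\textbf{How the paper's proof handles this.} It never faces this subcase, because its reduction lemma for the sorted case simply asserts that the $\alpha_1$-witness $y$ satisfies $D(p,y)\le D(p,z)$ ``since $y$ serves as the witness''. In effect it assumes only your Subcase 1 occurs. That ordering does not follow from the definition of $\alpha_1$-reachability. It does hold if every non-edge $(p,z)$ of $G$ has a witness no farther from $p$ than $z$. One such case is the slow-preprocessing graph built by sorted \textsc{RobustPrune} with parameter $\alpha_1$: there a missing edge $(p,z)$ was deleted by an earlier, hence no farther, kept neighbor.

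With that hypothesis added, your Subcase 1 chain is the entire general-metric lower bound; the sorting in the second pruning is not even used. Placing $x,q,p^*$ at essentially unit distance from $p$, with $D(x,q)=1/\alpha_1$, $D(p^*,x)=1/\alpha_2$ and $D(p^*,q)=1/\alpha_1+1/\alpha_2$, shows tightness.

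For the Euclidean bullet you need the same extra constraint $|px|\le|pq|$. Your step ``symmetrically, $q$ is extremal at angle $\theta_1$'' tacitly sets $|px|=|pq|$. Proving that this choice is optimal jointly over the radii and angles, in $\mathbb{R}^d$ rather than a fixed plane, is exactly what the paper's explicit Lagrange-multiplier certificate does. The lens-corner picture alone does not establish it.
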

Readers may find an analogy in our results to \cite{gollapudi2025sort}, where Euclidean metrics and neighbor sorting guarantee a better worst-case reachability. In our problem, neighbor sorting alone already admits a better reachability, and it is further improved in Euclidean metrics. However, Euclidean metrics alone do not admit better reachability.

The reason why the reachability of the pruned graph $G'$ is crucial is as follows. First, the following lemma in \cite{indyk2023worstcaseperformancepopularapproximate} can help us to bound the maximum degree of the pruned graph $G'$. 

\begin{lemma}[Lemma 3.3 in \cite{indyk2023worstcaseperformancepopularapproximate}]
    For any $\alpha > 1$, graph $G = (P, E)$, and $p \in P$, applying $\rprune(p, N_{out}(p), \alpha, \infty)$ to every $p \in P$ guarantees that the maximum degree of the modified graph is $O((4\alpha)^\lambda \log \Delta)$.
\end{lemma}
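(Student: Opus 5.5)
The plan is to bound the out-degree of each fixed vertex $p$ separately. The argument uses only the pruning rule and the order in which $\rprune$ (the sorted version) selects neighbors, and never the structure of $G$. So the bound holds no matter which candidate set $N_{out}(p)$ we start from, and in particular for any graph $G$.

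\textbf{Step 1: separation of selected neighbors.} Fix $p$ and let $u, v$ be two points that end up in the new $N_{out}(p)$, with $u$ selected before $v$. Since $\rprune$ always picks the remaining candidate closest to $p$, we have $D(p,u) \le D(p,v)$. Since $v$ survived the pruning step triggered by $u$, we have $\alpha \cdot D(u,v) > D(p,v)$. Hence any two selected neighbors satisfy
\[
D(u,v) > \frac{\max\{D(p,u), D(p,v)\}}{\alpha}.
\]

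\textbf{Step 2: annuli.} Let $r_{\min}$ be the minimum interpoint distance in $P$. Partition $P \setminus \{p\}$ into the rings $A_i = \{x : 2^{i} r_{\min} \le D(p,x) < 2^{i+1} r_{\min}\}$ for $i = 0, 1, \dots, \lceil \log_2 \Delta \rceil$. Every point of $P\setminus\{p\}$ lies in one of these $O(\log \Delta)$ rings. Fix a ring $A_i$ and set $r = 2^i r_{\min}$. By Step 1, any two selected neighbors in $A_i$ are at distance greater than $r/\alpha$, since both are at distance at least $r$ from $p$. All of them lie in the ball $B(p, 2r)$.

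\textbf{Step 3: packing via doubling dimension.} Apply the doubling property $\log_2(4\alpha)$ times. This covers $B(p,2r)$ by at most $2^{\lambda \lceil \log_2 (4\alpha)\rceil} = O((4\alpha)^{\lambda})$ balls of radius $r/(2\alpha)$; the rounding only affects the constant in the base of the exponent. Each such ball has diameter at most $r/\alpha$, so it contains at most one selected neighbor from $A_i$. Thus each ring contributes $O((4\alpha)^\lambda)$ out-neighbors. Summing over the $O(\log \Delta)$ rings gives out-degree $O((4\alpha)^\lambda \log\Delta)$ for every $p$, which is the claimed bound.

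The only delicate point is Step 1. The separation bound uses the ordering $D(p,u)\le D(p,v)$, so it relies on the sorted selection in $\rprune$. Without sorting, two selected neighbors in the same ring could be close if the farther one was selected first. With sorting, the rest is a routine ring-and-packing count, so I expect no real obstacle beyond tracking the constants in the doubling argument.
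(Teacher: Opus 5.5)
Your proof is correct and is essentially the argument behind Lemma 3.3 of Indyk and Xu, which the paper cites rather than reproves. That argument splits $P\setminus\{p\}$ into $O(\log\Delta)$ dyadic rings around $p$, covers each ring by $O((4\alpha)^\lambda)$ balls of diameter at most $r/\alpha$ (where $r$ is the ring's inner radius), and observes that each such ball holds at most one surviving neighbor. Your count has the same looseness as the source: $2^{\lambda\lceil\log_2(4\alpha)\rceil}$ is in general only bounded by $(8\alpha)^\lambda = 2^\lambda(4\alpha)^\lambda$.

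One correction to your closing remark: sorting is not needed. If $v$ survives the pruning triggered by an earlier-selected $u$, then $\alpha\, D(u,v) > D(p,v) \ge r$ for $v \in A_i$, whichever of $u,v$ is farther from $p$. So the same degree bound holds for \urprune. This matters because the paper also relies on this size bound for the pruned graph in the unsorted setting of \Cref{thm:unsorted}.
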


Using the above lemma, we know that the size of $G'$ is $n \cdot O((4\alpha_2)^\lambda \log \Delta)$. The query time of $G'$ now only depends on the number of steps during search, and the approximation of $G'$ can be quantified by its reachability via \Cref{thm:sortbeforeprune}, where the larger the reachability is, the smaller both the number of steps and approximation are. Therefore, the rest of analysis on query time and approximation is reduced to exploring the reachability of $G'$. We want it to be as large (or as close to $\alpha_2$) as possible. 

Next, we give the proofs of the above two theorems and include the skipped proof in the appendix. 

\subsection{Proof of \Cref{thm:unsorted}}
Let $G = (P,E)$ be an $\alpha_1$-reachable graph and let $G' = (P,E')$ be obtained by applying $\urprune(p, \emptyset, \alpha_2, \infty)$ at every $p \in P$.

Let $\alpha^*$ denote the reachability of $G'$. Fix an ordered pair $p,z$ that is tight for $\alpha^*$-reachability and satisfies $(p,z)\notin E'$ (such a pair must exist since $\alpha^*$ is finite). Then, the directed edge $(p,z)$ was either (a) absent in $G$, or (b) pruned from $G$. In case (b), there is some edge $(p,y)\in E$ for which $D(y,z)\le D(p,z)/\alpha_2$, which gives $\alpha^\ast\ge \alpha_2$.

In case (a), $\alpha_1$-reachability of $G$ implies there is some $(p,y)\in E$ such that $D(y,z) \le D(p,z) / \alpha_1$. This means that applying $\urprune$ with parameter $\alpha_2$ either (a1) prunes $(p,y)$ from $G'$, or (a2) there is some edge $(p,x)\in E$ such that $D(x,y) \le D(p,y)/\alpha_2$. As before, in case (a1) it follows that $\alpha^*\ge \alpha_1$. In case (a2), the edge $(p,x)$ gives a lower bound on the reachability of $G'$. Namely, $\alpha^*$ is at least the value of the following optimization problem:

    \begin{align*}
    & \underset{p, x, y, z\in X}{\min}
    & & {D(p, z)} / {D(x, z)} \\
    & \text{subject to}
    & & D(y, z) \leq {D(p, z)} / {\alpha_1} \\
    & & & D(x, y) \leq {D(p, y)} / {\alpha_2} \\
    \end{align*}
By invariance under scaling, we may take $D(p,z) = 1$ and flip the objective to get the following crucial lemma. 

\newcommand{\opt}{\mathrm{OPT}}
\begin{restatable}{lemma}{reduceunsorted}
    \label[lemma]{lem:reduceunsorted}
    Let $(X, D)$ be a metric space. Given parameters $\alpha_1 \ge \alpha_2 \ge 1$, the worst-case reachability of the graph $G'$ obtained by taking an $\alpha_1$-reachable graph $G$ and applying $\urprune(p, \emptyset, \alpha_2, \infty)$ at every $p \in P$ is $\min\{\alpha_2, 1/\opt\}$  where $\opt$ is the value of the following optimization problem:
\begin{equation*}
\begin{aligned}
& \underset{p, x, y, z\in X}{\max}
& & D(x, z) \\
& \text{subject to}
& & D(y, z) \leq {1} / {\alpha_1} \\
& & & D(x, y) \leq {D(p, y)} / {\alpha_2}  \\
& & & D(p,z) = 1
\end{aligned}
\end{equation*}
\end{restatable}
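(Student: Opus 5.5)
The proof has two directions. First, every graph produced as in the statement has reachability at least $\min\{\alpha_2, 1/\opt\}$. Second, for every admissible configuration there is a dataset and an $\alpha_1$-reachable graph whose pruned graph has reachability no larger than that value, so the bound is tight in the worst case.

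\emph{Lower bound.} Fix an ordered pair $(p,z)$ with $(p,z)\notin E'$, since pairs joined by an edge of $E'$ impose no constraint. I would follow the case analysis given just before the lemma.
\begin{itemize}
\item If $(p,z)$ was pruned, some kept edge $(p,y)$ satisfies $D(y,z)\le D(p,z)/\alpha_2$, so $z$ is $\alpha_2$-reachable from $p$.
\item If $(p,z)\notin E$, then $\alpha_1$-reachability of $G$ gives $y$ with $(p,y)\in E$ and $D(y,z)\le D(p,z)/\alpha_1$.
\item If moreover $(p,y)$ survives, then $z$ is $\alpha_1$-reachable, and $\alpha_1\ge\alpha_2$.
\item If $(p,y)$ was pruned, it was pruned by a kept $x$ with $D(x,y)\le D(p,y)/\alpha_2$.
\end{itemize}
One point needs care: the pruner $x$ must itself lie in $E'$. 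It does, because in \urprune a point only removes others at the moment it is added to $N_{out}(p)$. In this last case, rescale so that $D(p,z)=1$. The quadruple $(p,x,y,z)$ is then feasible for the program, so $D(x,z)\le\opt$ and $D(p,z)/D(x,z)\ge 1/\opt$. Minimizing over all pairs yields reachability at least $\min\{\alpha_2,1/\opt\}$.

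\emph{Upper bound.} The main task is to realize the optimum as an actual bad instance.
\begin{itemize}
\item Take a near-optimal feasible $(p,x,y,z)$ in $X$ and set $P=\{p,x,y,z\}$, perturbing slightly to avoid degenerate ties.
\item Let $N_{out}(p)=\{x,y\}$ in $G$. Give every other vertex an edge to every vertex, so those pairs are reachable in both $G$ and $G'$.
\item Check that $G$ is $\alpha_1$-reachable. The only missing edge is $(p,z)$, and $z$ is reached through $y$ because $D(y,z)\le 1/\alpha_1$.
\item Run \urprune at $p$ with the adversarial order that picks $x$ first; this is allowed since the unsorted version takes an arbitrary out-neighbor. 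Then $y$ is pruned by the second constraint, so $N'_{out}(p)=\{x\}$.
\item The only route from $p$ to $z$ in $G'$ is then via $x$, giving reachability at most $1/D(x,z)\approx 1/\opt$ (or at most $D(p,z)/D(x,z)$ exactly).
\end{itemize}
For the $\alpha_2$ term, use a second instance, $P=\{p,y,z\}$ with $D(y,z)=D(p,z)/\alpha_2$, $(p,z),(p,y)\in E$, and $y$ chosen first. Then $(p,z)$ is pruned and the reachability is exactly $\alpha_2$. One must check that such points exist in the metric space, for instance on a line in the Euclidean/general settings considered. Since ``worst case'' ranges over instances, taking the worse of the two constructions gives the matching upper bound.

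\emph{Main obstacle.} The hard part is the tightness construction. Three things need verifying: the other vertices' neighborhoods do not create alternative shortcuts; the pruning order really is adversarially available; and the supremum in $\opt$ is approached by point sets actually present in $X$, or is attained, by compactness after normalization. Beyond this lemma, evaluating $\opt=1/\alpha_1+1/\alpha_2+1$ via the triangle inequality and a collinear example is left to the subsequent step.
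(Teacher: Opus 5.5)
Your proof is correct. The lower-bound half is the paper's own case analysis: the edge $(p,z)$ was pruned; or it was absent with a witness $y$ that is kept; or it was absent with $y$ pruned by some $x$. You also make explicit a detail the paper uses silently: the pruner $x$ lies in $E'$, because \urprune only prunes at the moment a point is added.

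Where you genuinely go beyond the paper is the matching upper bound. The paper never builds a bad graph. It derives ``$\alpha^*$ is at least the value of the program.'' Then, after exhibiting the collinear configuration of \Cref{lem:unsorted-example}, it reads that lower bound as the worst-case value.

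Your instance is what actually justifies the word ``is'' in the statement. It sets $N_{out}(p)=\{x,y\}$, gives the other vertices complete out-neighborhoods, and uses the adversarial order that selects $x$ first. The reason it works is worth stating explicitly: $\alpha$-reachability is a one-hop condition, so for the pair $(p,z)$ only $N'_{out}(p)=\{x\}$ matters. The other vertices' neighborhoods only need to make $G$ $\alpha_1$-reachable. In $G'$ they are still at least $\alpha_2$-reachable, but that is irrelevant for an upper bound, so your worry about ``alternative shortcuts'' is moot.

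Two small simplifications are available:
\begin{itemize}
\item No perturbation is needed. The pruning test and the reachability definition both use non-strict inequalities, and the order is adversarial anyway.
\item The separate $\alpha_2$ instance is redundant in any space rich enough for the rescaling step. Taking $y=z$ and $D(x,z)=1/\alpha_2$ is feasible, so $\mathrm{OPT}\ge 1/\alpha_2$, and the minimum is always $1/\mathrm{OPT}$.
\end{itemize}

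Finally, both you and the paper rely on closure under scaling to normalize $D(p,z)=1$. The lemma should therefore be read for a scale-closed class of spaces (all metrics, or Euclidean space), not for a single fixed $(X,D)$.
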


We solve the optimization problem in a straightforward geometric way. We first apply the triangle inequality and the constraints in the above problem to get an upper bound of $D(x,z)$: 
\begin{align*}
    D(x,z) & \leq D(x,y) + D(y,z) \\
    & \leq D(p,y)/\alpha_2 + 1/\alpha_1 \\
    & \leq (D(p,z) + D(y,z))/\alpha_2 + 1/\alpha_1 \\
    & \leq (1 + 1/\alpha_1) / \alpha_2 + 1/\alpha_1 \\
    & = (\alpha_1 + \alpha_2 + 1)/(\alpha_1 \alpha_2),
\end{align*} 
where the first and third lines use the triangle inequality and the second and fourth lines use the constraints of the optimization problem. We also know $D(x,z)$ in any setting of $p,x,y,z$ that satisfies the problem constraints is a lower bound of the optimal objective since it is a maximization problem. It remains to show a setting where $ D(x,z) = (\alpha_1 + \alpha_2 + 1)/(\alpha_1 \alpha_2)$. 
\begin{lemma}
\label[lemma]{lem:unsorted-example}
    In a general metric space or a Euclidean space $(X,D)$, there exists a setting of four points $p,x,y,z \in X$ that satisfies $D(y,z) \leq 1/\alpha_1, D(x,y) \leq D(p,y)/\alpha_2, D(p,z) = 1$, and $D(x,z) = (\alpha_1 + \alpha_2 + 1)/(\alpha_1 \alpha_2)$.
\end{lemma}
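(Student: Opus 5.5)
The plan is to make every inequality in the chain of upper bounds preceding the lemma tight, by placing the four points on a single line. A collinear configuration works simultaneously in Euclidean space (take all four points on a line in $\mathbb{R}^d$) and, since $\mathbb{R}$ with $|\cdot|$ is itself a metric space, in the general setting. So one construction on the real line settles both cases.

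The chain requires three equalities. First, $D(x,z)=D(x,y)+D(y,z)$, so $y$ lies between $x$ and $z$. Second, $D(p,y)=D(p,z)+D(y,z)$, so $z$ lies between $p$ and $y$. Third, both constraints must hold with equality. This forces the order $p, z, y, x$ along the line. Concretely, place
\[
p=0,\qquad z=1,\qquad y=1+\tfrac{1}{\alpha_1},\qquad x = y + \tfrac{1}{\alpha_2}\Bigl(1+\tfrac{1}{\alpha_1}\Bigr).
\]

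Next, verify the constraints. We have $D(p,z)=1$ and $D(y,z)=1/\alpha_1$. Also $D(p,y)=1+1/\alpha_1$, so $D(x,y)=D(p,y)/\alpha_2$ holds with equality. Then compute
\[
D(x,z)=\frac{1}{\alpha_1}+\frac{1}{\alpha_2}\Bigl(1+\frac{1}{\alpha_1}\Bigr)=\frac{\alpha_1+\alpha_2+1}{\alpha_1\alpha_2},
\]
which is the claimed value.

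There is essentially no obstacle here, since everything is an explicit computation. The only care needed is to note that the four points are distinct: this holds because $\alpha_1,\alpha_2$ are finite, so all gaps are positive. One should also check that a configuration in $\mathbb{R}$ counts as a Euclidean instance, which holds by embedding the line in any $\mathbb{R}^d$.

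If the lemma is meant to feed back into \Cref{thm:unsorted} as an actual graph instance, one would additionally realize the configuration with $P=\{p,x,y,z\}$ and a suitable $\alpha_1$-reachable $G$ whose pruning leaves $(p,x)$ but removes $(p,y)$. However, the lemma as stated only asks for the point configuration, so I would stop at the verification above.
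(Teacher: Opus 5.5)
Your construction is exactly the paper's: the four points are collinear in the order $p,z,y,x$ with $D(p,z)=1$, $D(z,y)=1/\alpha_1$, and $D(y,x)=(1+1/\alpha_1)/\alpha_2$, chosen so that every inequality in the triangle-inequality upper bound is tight, and your check of the constraints and of $D(x,z)=(\alpha_1+\alpha_2+1)/(\alpha_1\alpha_2)$ is correct. Like the paper, you use a single configuration on a line for both the general-metric and Euclidean cases.
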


\begin{proof}
    In the setting, $p,x,y,z$ are collinear and $D(p,z) = 1, D(z,y) = 1/\alpha_1, D(x,y) = (1 + 1/\alpha_1)/\alpha_2$. The constraints can be easily verified. We then compute $D(x,z) = 1/\alpha_1 + (1 + 1/\alpha_1)/\alpha_2 = (\alpha_1 + \alpha_2 + 1) / (\alpha_1 \alpha_2)$ as desired.
\end{proof}
The setting in the proof is shown in \Cref{fig:unsorted_example}.
\begin{figure}[ht]
  \begin{center}
    \begin{tikzpicture}[scale=0.73]
  \coordinate (p) at (-4,0);
  \coordinate (z) at (0,0);
  \coordinate (y) at (2.5,0);
  \coordinate (x) at (6,0);

  \draw[axis] (-4.6,0) -- (6.6,0);

  \node[pt, fill=black] (P) at (p) {};
  \node[pt, fill=blue!75!black] (Z) at (z) {};
  \node[pt, fill=green!65!black] (Y) at (y) {};
  \node[pt, fill=red!75!black] (X) at (x) {};

  \node[above=3.5pt] at (P) {$p$};
  \node[above=3.5pt, text=blue!75!black]  at (Z) {$z$};
  \node[above=3.5pt, text=green!65!black] at (Y) {$y$};
  \node[above=3.5pt, text=red!75!black]   at (X) {$x$};

  \draw[dim] ($(P)+(0,-1.05)$) -- node[dimlab, below=3.9pt] {\scriptsize$D(p,z)=1$} ($(Z)+(0,-1.05)$);
  \draw[dim] ($(Z)+(0,-1.05)$) -- node[dimlab, below=3pt] {\scriptsize$D(z,y)=\frac{1}{\alpha_1}$} ($(Y)+(0,-1.05)$);
  \draw[dim] ($(Y)+(0,-1.05)$) -- node[dimlab, below=3pt] {\scriptsize$D(y,x)=\frac{D(p,y)}{\alpha_2}$} ($(X)+(0,-1.05)$);
\end{tikzpicture}

    \caption{The setting that achieves the optimal objective in the optimization problem in \Cref{lem:reduceunsorted}.}
    \label{fig:unsorted_example}
  \end{center}
\end{figure}
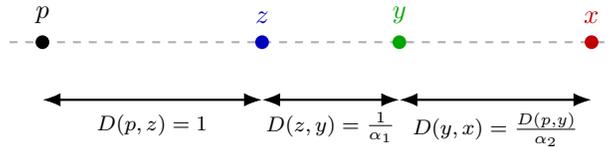
The setting exactly makes the inequalities used in constructing the upper bound tight. In conclusion, we show the optimization problem solves to $(\alpha_1 + \alpha_2 + 1) / (\alpha_1 \alpha_2)$. By \Cref{lem:reduceunsorted}, the worst-case reachability is $\min(\alpha_1, \alpha_2, (\alpha_1 \alpha_2) / (\alpha_1 + \alpha_2 + 1)) = (\alpha_1 \alpha_2) / (\alpha_1 + \alpha_2 + 1)$. That completes the proof for \Cref{thm:unsorted}. 

Also notice that, in \Cref{fig:unsorted_example}, $x$ is the furthest point from $p$, while $z$ is the closest one. Later we will see that this is a forbidden pattern when using $\rprune$ instead of $\urprune$, because $x, y , z$ have to be considered in order of their distances to $p$ in order to achieve an analogous case to $(a2)$.

\subsection{Proof Overview of \Cref{thm:sorted}}
Now we use \rprune where neighbors in the graph are considered for pruning in ascending order of their distances to the pivot point. The proof follows the same template as the unsorted case. The key difference is that, because $\rprune$ processes candidates in increasing distance from $p$,
any point $x$ that can prune $y$ must be considered no later than $y$, hence must satisfy $D(p,x)\le D(p,y)$.
Moreover, since $y$ serves as the $\alpha_1$-reachability witness for $(p,z)$, we also have $D(p,y)\le D(p,z)$. This yields the following lemma.

\begin{restatable}{lemma}{reducesorted}
    \label[lemma]{lem:reducesorted}
    Let $(X, D(\cdot, \cdot))$ be a metric space, and $p,x,y,z \in X$ be four points. Given parameters satisfying $\alpha_1 \ge \alpha_2 \ge 1$, the worst-case reachability of the graph $G'$ obtained by taking an $\alpha_1$-reachable graph and applying $\rprune(p, \emptyset, \alpha_2, \infty)$ at every $p \in P$ is $\min\{\alpha_2,1/\opt\}$, where $\opt$ is the value of the following optimization problem:
\begin{equation*}
\begin{aligned}
& \underset{p, x, y, z\in X}{\max}
& & D(x, z) \\
& \text{subject to}
& & D(y, z) \leq {1} / {\alpha_1} \\
& & & D(x, y) \leq {D(p, y)} / {\alpha_2}  \\
& & & D(x, y) \leq D(p,y) \leq D(p,z) = 1 
\end{aligned}
\end{equation*}
\end{restatable}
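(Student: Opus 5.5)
We prove \Cref{lem:reducesorted} by the same two-direction template as \Cref{lem:reduceunsorted}. We show that every non-edge of $G'$ is witnessed with ratio at least $\min\{\alpha_2,1/\opt\}$ (lower bound). We then show that any feasible point configuration can be realized as an $\alpha_1$-reachable graph whose pruned graph attains that ratio (tightness).

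\textbf{Lower bound.} Fix an ordered pair $(p,z)$ with $(p,z)\notin E'$. We split into the same cases as in the proof of \Cref{thm:unsorted}.
\begin{itemize}
\item If $(p,z)\in E$ was pruned by some surviving $y$, then $D(y,z)\le D(p,z)/\alpha_2$, so the ratio is at least $\alpha_2$. Here we need to check that $y$ itself survives. In \rprune, the pruner $p^*$ is added to $N_{out}(p)$ at the moment it prunes, so $(p,y)\in E'$ automatically.
\item Otherwise $(p,z)\notin E$. By $\alpha_1$-reachability of $G$ there is $(p,y)\in E$ with $D(y,z)\le D(p,z)/\alpha_1$.
\begin{itemize}
\item If $(p,y)\in E'$, the ratio is at least $\alpha_1\ge\alpha_2$.
\item If $(p,y)$ was pruned, it was pruned by some $x$ with $(p,x)\in E'$ and $\alpha_2 D(x,y)\le D(p,y)$.
\end{itemize}
\end{itemize}

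In the last sub-case, two additional order constraints must be justified.
\begin{itemize}
\item \textbf{$D(p,x)\le D(p,y)$.} \rprune selects $p^*=\arg\min$ over the remaining candidates. At the time $x$ was selected, $y$ was still a candidate, so $D(p,x)\le D(p,y)$. This gives $D(x,y)\le D(p,y)/\alpha_2\le D(p,y)$, so the first inequality of the chain holds; the paper's constraint $D(x,y)\le D(p,y)$ is redundant given $\alpha_2\ge1$.
\item \textbf{$D(p,y)\le D(p,z)$.} We may choose the witness $y$ so that this holds. If $D(p,y)>D(p,z)$, we would want to argue the bound differently. The simplest route is to note that the optimization is a relaxation; any additional valid constraint only shrinks the feasible set.
\end{itemize}

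The second point is the step I expect to be the main obstacle. A raw $\alpha_1$-witness $y$ need not be closer to $p$ than $z$. I would try first to handle the case $D(p,y)>D(p,z)$ directly: there, $z$ being unpruned in the original construction is irrelevant, but a crude triangle bound $D(x,z)\le D(x,y)+D(y,z)$ with $D(p,y)\le 1+1/\alpha_1$ only reproduces the unsorted bound. So instead I would argue that this case cannot be worse than the feasible region, by showing $D(p,y)\le D(p,z)$ is implied up to the needed estimate. If that fails, I would interpret the lemma as quantifying worst-case over graphs whose witnesses respect the sorted order; this is the natural reading given the overview's remark that $y$ "serves as the witness."

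After normalizing $D(p,z)=1$, the ratio $D(p,z)/D(x,z)$ is bounded below by $1/\opt$. Combining the cases gives reachability at least $\min\{\alpha_1,\alpha_2,1/\opt\}=\min\{\alpha_2,1/\opt\}$.

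\textbf{Tightness.} Given a feasible (near-)optimal configuration $p,x,y,z$, take $P=\{p,x,y,z\}$, scaled so that $D(p,z)=1$. Let $G$ have edges $p\to x$ and $p\to y$, and all edges out of every other vertex. We check that $G$ is $\alpha_1$-reachable: $z$ is reached from $p$ via $y$ since $\alpha_1 D(y,z)\le D(p,z)$, and all other pairs are direct edges.

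Running \rprune at $p$ on $\{x,y\}$ proceeds as follows.
\begin{itemize}
\item The order constraint $D(p,x)\le D(p,y)$ makes $x$ selected first. Ties are broken toward $x$, or we perturb slightly, which is fine since we take a supremum.
\item $x$ then prunes $y$, since $\alpha_2 D(x,y)\le D(p,y)$.
\end{itemize}
So $N_{out}(p)=\{x\}$ in $G'$. Then $z$ is reachable from $p$ in $G'$ only via $x$, with ratio $1/D(x,z)$. Pruning at the other vertices keeps them $\alpha_2$-reachable by the first case above. Hence the reachability of $G'$ is at most $\min\{\alpha_2,1/D(x,z)\}$, which approaches $\min\{\alpha_2,1/\opt\}$. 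The $\alpha_2$ term is attained separately by a three-point example $p,y,z$ with $y$ pruning $z$ at equality.
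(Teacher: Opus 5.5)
Your outline matches the paper's template: the case split, the observation that the pruner of $y$ survives in $E'$, sorting forcing $D(p,x)\le D(p,y)$, and a four-point gadget for tightness. You also correctly isolate the weak step. Nothing in the hypothesis makes the $\alpha_1$-witness $y$ of a non-edge $(p,z)$ satisfy $D(p,y)\le D(p,z)$. None of your ways around it can work, however. The ``relaxation'' remark is circular, since adding $D(p,y)\le D(p,z)$ is harmless only if that constraint is valid.

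The case $D(p,y)>D(p,z)$ really can be worse, because for an arbitrary $\alpha_1$-reachable $G$ the statement is false. Take $\alpha_1=3$, $\alpha_2=2$ and the four-point metric
$D(p,z)=1$, $D(y,z)=1/3$, $D(p,y)=4/3$, $D(p,x)=4/3-\epsilon$, $D(x,y)=2/3$, $D(x,z)=1$ for small $\epsilon>0$.
Triangles $pyz$ and $xyz$ are degenerate, and $pxy$ and $pxz$ are easily checked. Let $N_{\text{out}}(p)=\{x,y\}$ and let every other vertex point to all others. Then $G$ is $3$-reachable: its only non-edge $(p,z)$ is witnessed by $y$, since $3D(y,z)=D(p,z)$.

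Sorted \textsc{RobustPrune} at $p$ with $\alpha_2=2$ selects $x$ first and deletes $y$, since $2D(x,y)=D(p,y)$. So $N_{\text{out}}(p)=\{x\}$ in $G'$ and $D(p,z)/D(x,z)=1$. Thus $G'$ has reachability $1=\alpha_1\alpha_2/(\alpha_1+\alpha_2+1)$, which is the unsorted value. The lemma instead predicts at least $6/5$, because $\mathrm{OPT}\le 1/\alpha_1+1/\alpha_2$ in every metric by the triangle inequality.

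The Euclidean case fails the same way. Take $p=0$, $z=(1,0)$, $y=(13/12,\sqrt{15}/12)$, $x=(19/24,5\sqrt{15}/24)$, with the same graph. Then $\|x\|=\|y\|$, $\|y-z\|=1/3$, $\|x-y\|=\|y\|/2$, and $\|x-z\|=5/6$. This exceeds the claimed optimum $\tfrac13\sqrt{1-\tfrac1{16}}+\tfrac12\sqrt{1-\tfrac1{36}}\approx 0.816$; nudge $x$ slightly toward $y$ to break the tie.

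So what is missing is a hypothesis, not an argument. Your last fallback is the right one, and you should state it up front: assume every non-edge $(p,z)$ of $G$ has an $\alpha_1$-witness $y$ with $D(p,y)\le D(p,z)$. This holds, for example, when $G$ is built by the sorted slow-preprocessing construction with parameter $\alpha_1$. There the point that pruned $z$ was selected while $z$ was still a candidate, so it is no farther from $p$ than $z$. The paper's overview asserts $D(p,y)\le D(p,z)$ only ``since $y$ serves as the $\alpha_1$-reachability witness,'' so it implicitly needs the same restriction. Under it, your lower bound goes through verbatim and your gadget lies in the restricted class, so the characterization holds.

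Separately, read the last constraint as $D(p,x)\le D(p,y)\le D(p,z)=1$, as in the Euclidean program of the appendix. You noted that the literal constraint $D(x,y)\le D(p,y)$ is redundant. But your tightness step then uses $D(p,x)\le D(p,y)$ to have $x$ selected first. Under the literal reading a maximizer can have $D(p,x)>D(p,y)$; in $\mathbb{R}^d$, put $x$ on the ray from $z$ through $y$. Then sorted \textsc{RobustPrune} selects $y$ first, $y$ prunes $x$, and the gadget fails.
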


Notice that the only difference in the optimization problem is that there is an ordered condition in the last constraint, which rules out the pattern seen in the previous subsection. However, we can obtain an upper bound on the optimal objective of the problem in the same way using the triangle inequality and problem constraints: $D(x,z) \leq D(x,y) + D(y,z) \leq D(p,y)/\alpha_2 + 1/\alpha_1 \leq 1/\alpha_2 + 1/\alpha_1 = (\alpha_1 + \alpha_2) / (\alpha_1 \alpha_2)$. Note that the upper bound is smaller than the one in unsorted case. In the general metric space, the upper bound is actually tight.
\begin{lemma}
\label[lemma]{lem:sorted-general-example}
    In a general metric space $(X,D)$, there exists a setting of four points $p,x,y,z \in X$ that satisfies $D(y,z) \leq 1/\alpha_1, D(x,y) \leq D(p,y)/\alpha_2, D(x, y) \leq D(p,y) \leq D(p,z) = 1$, and $D(x,z) = (\alpha_1 + \alpha_2)/(\alpha_1 \alpha_2)$.
\end{lemma}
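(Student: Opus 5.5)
The plan is to give an explicit four-point metric space by listing all six pairwise distances and then checking the triangle inequalities directly. The upper-bound derivation just before the statement tells us which inequalities must be tight. We need $D(p,y)=D(p,z)=1$, $D(x,y)=D(p,y)/\alpha_2=1/\alpha_2$, $D(y,z)=1/\alpha_1$, and $x,y,z$ "collinear" so that $D(x,z)=D(x,y)+D(y,z)$. So I would set
\begin{align*}
D(p,z)=1,\quad D(p,y)=1,\quad D(y,z)=\tfrac{1}{\alpha_1},\quad D(x,y)=\tfrac{1}{\alpha_2},\quad D(x,z)=\tfrac{1}{\alpha_1}+\tfrac{1}{\alpha_2}=\tfrac{\alpha_1+\alpha_2}{\alpha_1\alpha_2}.
\end{align*}
The only distance left free is $D(p,x)$, and I would take $D(p,x)=1$. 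The point of working in a general metric space is that $p$ need not be placed consistently in the plane.

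The constraints of the statement are then immediate. We have $D(y,z)=1/\alpha_1$ and $D(x,y)=1/\alpha_2=D(p,y)/\alpha_2$. The chain $D(x,y)=1/\alpha_2\le 1=D(p,y)\le 1=D(p,z)$ holds because $\alpha_2>1$. The objective $D(x,z)$ equals $(\alpha_1+\alpha_2)/(\alpha_1\alpha_2)$ by construction.

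What remains, and is the only real work, is to check that $D$ is a metric. All distances are positive, so it suffices to check the triangle inequality on each of the four triples.
\begin{itemize}
\item Triple $\{x,y,z\}$: the inequality $D(x,z)\le D(x,y)+D(y,z)$ holds with equality. The other two, e.g.\ $D(x,y)\le D(x,z)+D(z,y)$, are trivial since $D(x,z)$ is the largest side.
\item Triple $\{p,y,z\}$: the sides are $1,1,1/\alpha_1$, which is valid.
\item Triple $\{p,x,y\}$: the sides are $1,1/\alpha_2,1$, which is valid since $1/\alpha_2\le 2$.
\item Triple $\{p,x,z\}$: the sides are $1,1,1/\alpha_1+1/\alpha_2$. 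This is valid because $1/\alpha_1+1/\alpha_2<2$ when $\alpha_1,\alpha_2>1$.
\end{itemize}
The step I would double-check is this last triple, since it is the only place the two-sided bound $\alpha_i>1$ is genuinely needed. The choice $D(p,x)=1$ is what makes it work.
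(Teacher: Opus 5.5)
Your construction is correct and is essentially the paper's: the same four-point distance matrix with $p$ at distance $1$ from each of $x,y,z$ and $D(x,z)=D(x,y)+D(y,z)$. You also check the triangle inequalities explicitly, which the paper leaves as ``easy to verify.'' The paper's matrix assigns $D(x,y)=1/\alpha_1$ and $D(y,z)=1/\alpha_2$. When $\alpha_1>\alpha_2$ this violates $D(y,z)\le 1/\alpha_1$, so your assignment $D(x,y)=1/\alpha_2$, $D(y,z)=1/\alpha_1$ is the one that actually satisfies the constraints.
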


\begin{proof}
    The setting is defined by the following distance matrix: 
\[
\renewcommand{\arraystretch}{1.5} 
\setlength{\tabcolsep}{10pt}      
\begin{array}{r|cccc}
   & p & x & y & z \\
\hline
 p & 0 &   &   &   \\
 x & 1 & 0 &   &   \\
 y & 1 & \frac{1}{\alpha_1} & 0 &   \\
 z & 1 & \frac{\alpha_1 + \alpha_2}{\alpha_1 \alpha_2} & \frac{1}{\alpha_2} & 0 \\
\end{array}
\]

It is easy to verify that the triangular inequalities and the problem constraints are satisfied.
\end{proof}

\Cref{fig:sorted}(a) shows the setting in the proof, where $x,y,z$ are on the (general metric) unit sphere centered at $p$. Since the lower bound of $D(x,z)$ given by the above setting matches the upper bound, we can conclude that the worst-case reachability is $(\alpha_1 \alpha_2) / (\alpha_1 + \alpha_2)$ for general metric spaces by \Cref{lem:reducesorted}. 

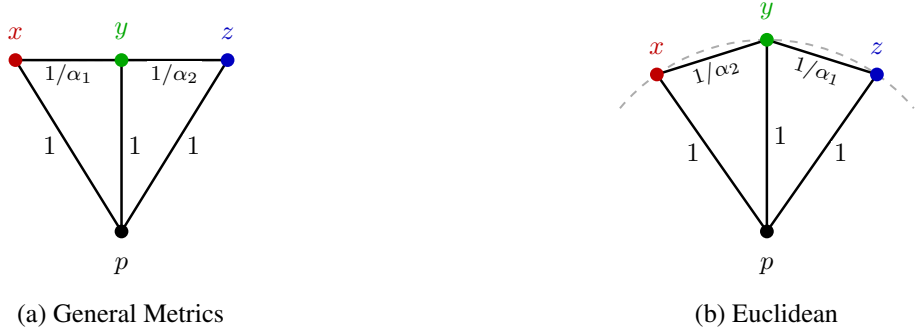
\begin{figure}[ht]
  \begin{center}
  \centering
  \begin{subfigure}[t]{0.48\linewidth}
    \centering
    \begin{tikzpicture}[scale=0.6]
      \coordinate (P) at (0,-2.25);
      \coordinate (Y) at (0, 1.55);
      \coordinate (X) at (-2.35, 1.55);
      \coordinate (Z) at ( 2.35, 1.55);

      \draw[axis] (X) -- (Y) -- (Z);

      \draw[dist] (P) -- (X) node[lab, midway, xshift=-0.25cm] {$1$};
      \draw[dist] (P) -- (Y) node[lab, midway, xshift=0.18cm] {$1$};
      \draw[dist] (P) -- (Z) node[lab, midway, xshift=0.25cm] {$1$};

      \draw[dist] (X) -- (Y) node[midway, yshift=-0.2cm] {\scriptsize$1/\alpha_1$};
      \draw[dist] (Y) -- (Z) node[lab, midway, yshift=-0.2cm] {\scriptsize$1/\alpha_2$};

      \node[pt, fill=black] (p) at (P) {};
      \node[pt, fill=red!75!black] (x) at (X) {};
      \node[pt, fill=green!65!black] (y) at (Y) {};
      \node[pt, fill=blue!75!black] (z) at (Z) {};

      \node[below=6pt] at (p) {$p$};
      \node[above=4pt, text=red!75!black]   at (x) {$x$};
      \node[above=4pt, text=green!65!black] at (y) {$y$};
      \node[above=4pt, text=blue!75!black]  at (z) {$z$};
    \end{tikzpicture}

    \caption{General Metrics}
  \end{subfigure}
  \hfill
  \begin{subfigure}[t]{0.48\linewidth}
    \centering
    \begin{tikzpicture}[scale=0.6]
      \coordinate (P) at (0,-2.25);
    \def\r{4.25} 

    \path (P) ++(125:\r) coordinate (X);
    \path (P) ++( 90:\r) coordinate (Y);
    \path (P) ++( 55:\r) coordinate (Z);

    \draw[axis] (P) ++(140:\r) arc[start angle=140, end angle=40, radius=\r];

      \draw[dist] (P) -- (X) node[midway, xshift=-0.25cm] {$1$};
      \draw[dist] (P) -- (Y) node[midway, xshift=0.18cm] {$1$};
      \draw[dist] (P) -- (Z) node[lab, midway, xshift=0.25cm] {$1$};

      \draw[dist] (X) -- (Y) node[midway,  yshift=-0.2cm,sloped] {\scriptsize$1/\alpha_2$};
      \draw[dist] (Y) -- (Z) node[midway, yshift=-0.2cm, sloped] {\scriptsize$1/\alpha_1$};

      \node[pt, fill=black] (p) at (P) {};
      \node[pt, fill=red!75!black] (x) at (X) {};
      \node[pt, fill=green!65!black] (y) at (Y) {};
      \node[pt, fill=blue!75!black] (z) at (Z) {};

      \node[below=6pt] at (p) {$p$};
      \node[above=4pt, text=red!75!black]   at (x) {$x$};
      \node[above=4pt, text=green!65!black] at (y) {$y$};
      \node[above=4pt, text=blue!75!black]  at (z) {$z$};
    \end{tikzpicture}

    \caption{Euclidean}
  \end{subfigure}
    \caption{The settings that achieve optimal objectives in the optimization problem in \Cref{lem:reducesorted} for (a) general metric spaces and (b) Euclidean metrics.}
    \label{fig:sorted}
  \end{center}
\end{figure}

Unfortunately, the optimal setting for general metric spaces that requires $x,y,z$ to be simultaneously collinear and on a unit sphere is impossible in Euclidean metrics. This indicates that the upper bound derived only using the triangle inequality and the problem constraints is likely to be loose in Euclidean space. We resort to tools for semi-definite programming to solve for Euclidean space as in \cite{gollapudi2025sort} and get the following lemma.

\begin{restatable}{lemma}{sortedeuclidopt}
\label[lemma]{lem:sorted-euclid-opt}
Let $\alpha_1\ge \alpha_2> 1$. The value of the optimization problem
\begin{equation*}
\begin{aligned}
\max_{x,y,z\in\mathbb{R}^d}\quad & \|x-z\|_2 \\
\text{s.t.}\quad
& \|y-z\|_2 \le \frac{\|z\|_2}{\alpha_1},\\
& \|x-y\|_2 \le \frac{\|y\|_2}{\alpha_2},\\
& \|x\|_2 \le \|y\|_2 \le \|z\|_2 = 1
\end{aligned}
\end{equation*}
is
\[
\frac{1}{\alpha_1}\sqrt{1-\frac{1}{4\alpha_2^2}}
\;+\;
\frac{1}{\alpha_2}\sqrt{1-\frac{1}{4\alpha_1^2}}.
\]
\end{restatable}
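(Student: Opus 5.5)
We will place $p$ at the origin, as the statement already does, and split the proof into a lower-bound construction and a matching upper bound. The target value is a chord length. Write $\phi_i=\arcsin\bigl(1/(2\alpha_i)\bigr)$. Then $\sqrt{1-1/(4\alpha_i^2)}=\cos\phi_i$, and the claimed value equals $2(\sin\phi_1\cos\phi_2+\cos\phi_1\sin\phi_2)=2\sin(\phi_1+\phi_2)$.

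\textbf{Lower bound.} This is the configuration of \Cref{fig:sorted}(b). Put $x,y,z$ on the unit circle in a plane, with $z$ at angle $0$, $y$ at angle $2\phi_1$ and $x$ at angle $2\phi_1+2\phi_2$. Then $\|y-z\|_2=2\sin\phi_1=1/\alpha_1$, $\|x-y\|_2=2\sin\phi_2=1/\alpha_2=\|y\|_2/\alpha_2$, and all three norms equal $1$. This gives $\|x-z\|_2=2\sin(\phi_1+\phi_2)$, provided $2\phi_1+2\phi_2\le\pi$. That holds because $\alpha_i>1$ forces $\phi_i<\pi/6$.

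\textbf{Upper bound, reduction to the plane.} First I would reduce to $\mathbb{R}^2$. Fix $y$ and $z$ and rotate $x$ about the line through $0$ and $y$; this preserves $\|x\|_2$ and $\|x-y\|_2$. Among these rotations, $\|x-z\|_2$ is maximized when $x$ lies in the plane spanned by $y$ and $z$, on the side of the line $0y$ opposite to $z$. In polar coordinates write $z=(1,0)$, $y=s(\cos a,\sin a)$ and $x=t(\cos(a+b),\sin(a+b))$ with $0\le t\le s\le 1$ and $a,b\ge 0$. The constraints become
\[
1+s^2-2s\cos a\le 1/\alpha_1^2, \qquad s^2+t^2-2st\cos b\le s^2/\alpha_2^2 .
\]
The objective becomes $t^2+1-2t\cos(a+b)$.

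\textbf{Upper bound, angular estimates.} Next I would bound the angles. Scale invariance of the second constraint turns it into $1+\rho^2-2\rho\cos b\le 1/\alpha_2^2$ with $\rho=t/s\le 1$. The largest $b$ compatible with $\rho\le1$ is $b=2\phi_2$ when $1/\alpha_2\le 1$; for $\rho<1$ a larger angle is possible, up to $\arcsin(1/\alpha_2)$ at $\rho=\sqrt{1-1/\alpha_2^2}$. The same holds for $a$ versus $s$. So the naive angle bound is too weak, and the norms must be traded against the angles.

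\textbf{Main obstacle.} The crux is showing that the objective is maximized at $s=t=1$ despite this tradeoff. For fixed $s$ and $t$, the objective increases in $a+b$, so each angle is pushed to its boundary value. This gives $\cos a=(1+s^2-1/\alpha_1^2)/(2s)$ and, with $\rho=t/s$, $\cos b=(1+\rho^2-1/\alpha_2^2)/(2\rho)$. The objective is then an explicit function of $(s,\rho)$ on $\{0<\rho\le1,\ 0<s\le1\}$. I would verify that it is nondecreasing in $\rho$ and then in $s$, so the maximum is at $s=\rho=1$.

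\textbf{Fallback: SDP certificate.} If that calculus becomes messy, I would switch to the SDP route used in \cite{gollapudi2025sort}. Write everything in terms of the Gram matrix of $(x,y,z)$; each constraint is linear in the Gram entries. Then exhibit nonnegative multipliers $\lambda_1,\lambda_2,\mu_1,\mu_2$ for the two distance constraints and the two norm-ordering constraints, together with the equality $\|z\|_2=1$. The goal is to make
\[
\|x-z\|_2^2-\bigl(2\sin(\phi_1+\phi_2)\bigr)^2-\sum(\text{multiplier}\times\text{slack})
\]
a negative semidefinite quadratic form, in the spirit of an identity like $-\|c_1x+c_2y+c_3z\|_2^2$. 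The coefficients $c_i$ would be read off from the tight configuration, as KKT stationarity at the extremal point. Finding these multipliers and checking that the resulting $3\times3$ matrix is negative semidefinite is the step I expect to require the most work.
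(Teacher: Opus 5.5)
Your lower bound is correct and is exactly the paper's construction. The upper bound has a genuine gap, because the monotonicity claim your main route rests on is false.

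Write $F(s,\rho)$ for the squared objective after $a,b$ are pushed to their boundary values. Its domain is $s\ge 1-\frac{1}{\alpha_1}$, $\rho\ge 1-\frac{1}{\alpha_2}$, not $(0,1]^2$. At $s=1-\frac{1}{\alpha_1}$ the first constraint forces $a=0$, i.e.\ $y=sz$. At $\rho=1-\frac{1}{\alpha_2}$ the second forces $b=0$, so $x=\rho s z$ and $F=\bigl(1-s+\frac{s}{\alpha_2}\bigr)^2$. By contrast, $F(s,1)=(1-s)^2+\frac{s}{\alpha_2^2}$. The difference is $\frac{s(1-s)}{\alpha_2}\bigl(2-\frac{1}{\alpha_2}\bigr)>0$, so $F(s,\cdot)$ is not nondecreasing. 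For $\alpha_1=\alpha_2=2$ this is $\|x-z\|_2=0.75$ versus about $0.61$.

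The symmetric computation gives $F(1-\frac{1}{\alpha_1},\rho)-F(1,\rho)=\frac{\rho(1-\rho)}{\alpha_1}\bigl(2-\frac{1}{\alpha_1}\bigr)>0$ at $\rho=1-\frac{1}{\alpha_2}$. So reversing the order of the coordinates fails as well. The underlying reason is this: for fixed $y$, the farthest point from $z$ in the ball $\|x-y\|_2\le\|y\|_2/\alpha_2$ is $x^*=y+\frac{\|y\|_2}{\alpha_2}\frac{y-z}{\|y-z\|_2}$. This point can satisfy $\|x^*\|_2<\|y\|_2$, in which case the best $\rho$ is interior.

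One way to repair your route is to split on exactly this.
\begin{itemize}
\item If $\|x^*\|_2\ge\|y\|_2$, then $F(s,\cdot)$ is nondecreasing, so $F(s,\rho)\le F(s,1)\le F(1,1)$.
\item The opposite case occurs precisely for $s$ below a threshold $s^\dagger$. There $\sqrt{F(s,\rho)}\le\frac{1}{\alpha_1}+\frac{s}{\alpha_2}\le\frac{1}{\alpha_1}+\frac{s^\dagger}{\alpha_2}=\sqrt{F(s^\dagger,1)}\le\sqrt{F(1,1)}$. The equality holds because at the threshold $x^*$ is exactly the $\rho=1$ point, so the triangle inequality is tight.
\end{itemize}
With that case split, plus your check that $F(\cdot,1)$ is nondecreasing, your planar argument would be a valid and more elementary alternative to the paper's. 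As written, it is not.

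Your fallback is the right idea and is precisely how the paper proves the upper bound. However, you only name the method, and the certificate is the entire content of that half. In your notation, the paper relaxes $\|z\|_2=1$ to $\|z\|_2\le 1$ and sets $h_i=\sin 2\phi_i$, $h_{12}=\sin 2(\phi_1+\phi_2)$, $\beta=2\sin(\phi_1+\phi_2)$. It uses the multipliers
\begin{itemize}
\item $\frac{h_1+h_2-h_{12}}{h_2}$ for $\|x\|^2\le\|y\|^2$,
\item $\frac{h_1+h_2-h_{12}}{h_1}$ for $\|y\|^2\le\|z\|^2$,
\item $\frac{h_{12}}{h_1}$ for the $y$--$z$ constraint,
\item $\frac{h_{12}}{h_2}$ for the $x$--$y$ constraint,
\item $\beta^2$ for $\|z\|^2\le 1$.
\end{itemize}
The Lagrangian then equals $\beta^2-\|ax+by+cz\|^2$ with $a=\sqrt{h_1/h_2}$, $b=-h_{12}/\sqrt{h_1h_2}$, $c=\sqrt{h_2/h_1}$. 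As you anticipated, $ax+by+cz=0$ at the extremal configuration. Nonnegativity of the multipliers amounts to $h_{12}\le h_1+h_2$. The $\|y\|^2$ and $\|z\|^2$ coefficient checks reduce to the identity $h_{12}=h_1\cos 2\phi_2+h_2\cos 2\phi_1$.
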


can still be obtained by $p,x,y,z$ on the same 2-dimensional plane in Euclidean metrics. In the optimal setting (\Cref{fig:sorted}(b)), intuitively speaking, the curvature of Euclidean unit sphere pushes $x, z$ towards each other, which results in a smaller maximum $D(x,z)$, thus a larger worst-case reachability for sorted Euclidean metrics.  

\section{Experimental Evaluation}

We have shown theoretical guarantees of the $\alpha$-reachable graphs pruned by $\rpt$. In this section, we test $\rpt$ on DiskANN indices to evaluate its real-world performance. 
We will use $100$-recall@$100$, a widely adopted metric in practice, to measure query accuracy. \recall is the proportion of the true top $100$ nearest neighbors contained in the top $100$ results returned by the algorithm. 

\subsection{Setup}

\textbf{Hardware.} We conduct all experiments on a cloud instance equipped with an AMD EPYC 9B14 CPU (15 cores, 30 threads) and 256GB RAM, running Debian GNU/Linux 12.

\textbf{Datasets.} We evaluate our methods on four benchmark datasets obtained from the BigANN benchmark repository~\citep{simhadri2022results}.
\textbf{SIFT-1M} and \textbf{GIST-1M}~\citep{jegou2011product} are classical computer vision datasets; the former consists of 1 million 128-dimensional SIFT feature vectors with 10,000 query vectors, while the latter contains 1 million 960-dimensional GIST descriptors capturing global image structure, with 1,000 query vectors.
\textbf{Deep-1M}~\citep{babenko2016efficient} comprises 1 million 96-dimensional vectors derived from deep neural network embeddings, with 10,000 query vectors.
Finally, \textbf{MSSPACEV-1M} is a subset of 1 million 100-dimensional vectors from Microsoft's Bing query embeddings~\citep{simhadri2022results}, originally stored as int8 and converted to float32, with 29,316 query vectors.
All datasets use Euclidean distance as the similarity metric. 
\paragraph{Tuning Scheme Configuration}
We implemented the original DiskANN (Vamana)\footnote{The index is called Vamana in \cite{NEURIPS2019_09853c7f} and ``fast preprocessing version'' in \cite{indyk2023worstcaseperformancepopularapproximate}.} in \cite{NEURIPS2019_09853c7f} in Python. We construct the DiskANN indices for the above four datasets with the same parameter setting in \cite{NEURIPS2019_09853c7f}: $\alpha_1 = 1.2, R = 70, L-Build = 75$. They are referred to as the base or meta indices in the rest of this section. Then we tune the base indices via $\rpt$ and rebuild with $\alpha_2 = 1.1, 1.05, 1.01$. During query time, we use $\Cref{alg:greedySearch}$ with the start vertex $s$ being the medoid of the dataset, $k = 100$ (since we will use \recall to measure accuracy). We vary the index search effort by tuning the beam size parameter $L$ of \Cref{alg:greedySearch} as a knob to obtain different trade-offs between recall and query latency. 

\subsection{Comparison between $\rpt$ and Rebuild for Tuning}
\begin{table}[ht]
\centering
\caption{Runtime comparison: Rebuild vs.\ \rpt (seconds). The rightmost columns show the cumulative time and overall speedup (Total $T_{\text{rebuild}} / \text{Total } T_{\text{prune}}$) across all three configurations.}
\label{tab:rebuild-vs-prune-sec}
\resizebox{\textwidth}{!}{
\begin{tabular}{lrrrrrrrrrrrr}
\toprule
& \multicolumn{3}{c}{$\alpha=1.01$} & \multicolumn{3}{c}{$\alpha=1.05$} & \multicolumn{3}{c}{$\alpha=1.10$} & \multicolumn{3}{c}{\textbf{Total (All Three $\boldsymbol{\alpha}$)}} \\
\cmidrule(lr){2-4} \cmidrule(lr){5-7} \cmidrule(lr){8-10} \cmidrule(lr){11-13}
\textbf{Dataset} & $T_{\text{rebuild}}$ & $T_{\text{prune}}$ & Speedup & $T_{\text{rebuild}}$ & $T_{\text{prune}}$ & Speedup & $T_{\text{rebuild}}$ & $T_{\text{prune}}$ & Speedup & $T_{\text{rebuild}}$ & $T_{\text{prune}}$ & Speedup \\
\midrule
SIFT1M & 5,728 & 409 & 14$\times$ & 6,220 & 421 & 15$\times$ & 6,728 & 482 & 14$\times$ & 18,676 & 1,312 & 14$\times$ \\
DEEP1M & 5,536 & 376 & 15$\times$ & 5,911 & 499 & 12$\times$ & 6,745 & 555 & 12$\times$ & 18,192 & 1,430 & 13$\times$ \\
GIST1M & 12,073 & 294 & 41$\times$ & 14,665 & 367 & 40$\times$ & 21,312 & 462 & 46$\times$ & 48,050 & 1,123 & 43$\times$ \\
MSSPACEV1M & 7,573 & 492 & 15$\times$ & 9,028 & 560 & 16$\times$ & 11,570 & 495 & 23$\times$ & 28,171 & 1,547 & 18$\times$ \\
\bottomrule
\end{tabular}
}
\end{table}

\begin{figure}[ht]
  \begin{center}
\centerline{\includegraphics[width=\textwidth]{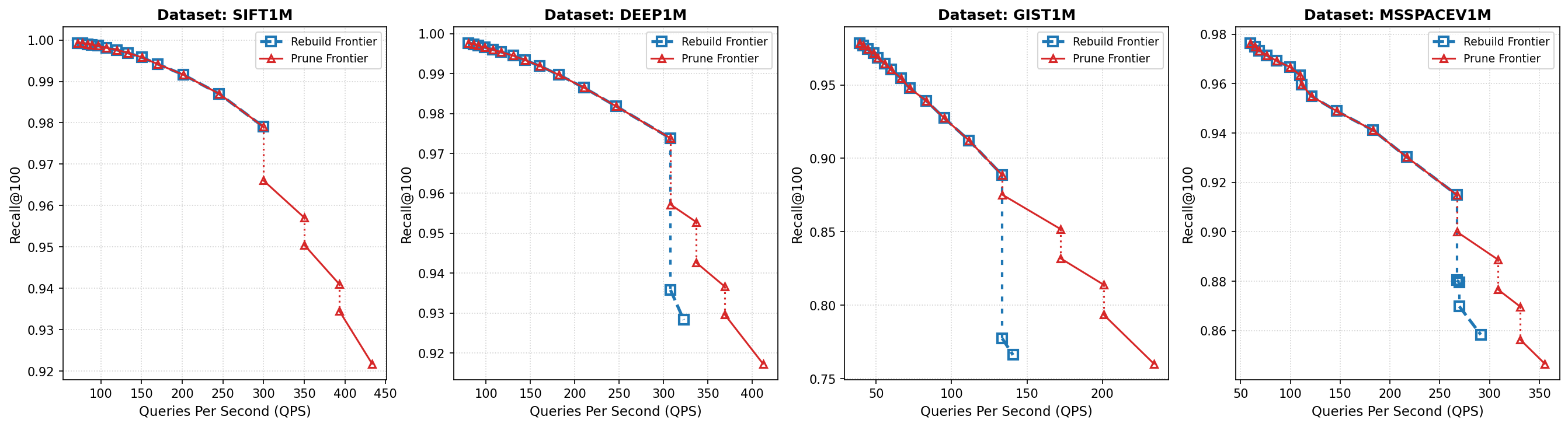}}
    \caption{Recall-QPS trade-off frontiers achieved by a base DiskANN graph of $\alpha_1 = 1.2$ then pruned graphs from the base graph via $\rpt$ with $\alpha_2 = 1.1, 1.05, 1.01$, and the same base DiskANN graph of $\alpha = 1.2$ along with rebuilt DiskANN graphs of $\alpha = 1.1, 1.05, 1.01$. }
    \label{fig:tuning-compare}
  \end{center}
\end{figure}

We compare the performance of $\rpt$ and rebuild for tuning the base index with three new alpha values by total processing time cost and recall-QPS trade-off.

Table~\ref{tab:rebuild-vs-prune-sec} demonstrates the tuning efficiency of $\rpt$. Our method achieves significant speedups compared to the rebuilding baseline across all datasets. Notably, the speed-up is the most significant on GIST1M (achieving up to $43\times$ total speedup). We attribute this performance to GIST1M’s inherently sparser graph structure: at $\alpha=1.2$, the index averages only 29 edges per node, compared to 52 for MsSpace1M. While this lower edge density significantly reduces the pruning workload for $\rpt$, the rebuilding baseline fails to exploit this structural sparsity because it remains latent to the algorithm until the graph construction is complete, resulting in substantially higher computational overhead.

\Cref{fig:tuning-compare} illustrates the Pareto frontiers for the Recall-QPS trade-off, derived by combining the base index with either pruned indices or rebuilt indices. The composite frontier traces the base index curve to its maximum achievable QPS before transitioning to the curve segment of the next pruning or rebuilding configuration. Across all four datasets, the frontier generated by $\rpt$ (Prune Frontier) consistently outperforms the Rebuild Frontier. Notably, on the SIFT1M dataset, the rebuilt indices are entirely dominated by the base graph and fail to extend the efficiency frontier. These prune frontiers can be reconstructed by connecting the upper-right segments of the individual curves presented in \Cref{fig:tuning}.

\subsection{Performance of Individual Indices Returned by $\rpt$}
\begin{figure}[ht]
  \begin{center}
\centerline{\includegraphics[width=\textwidth]{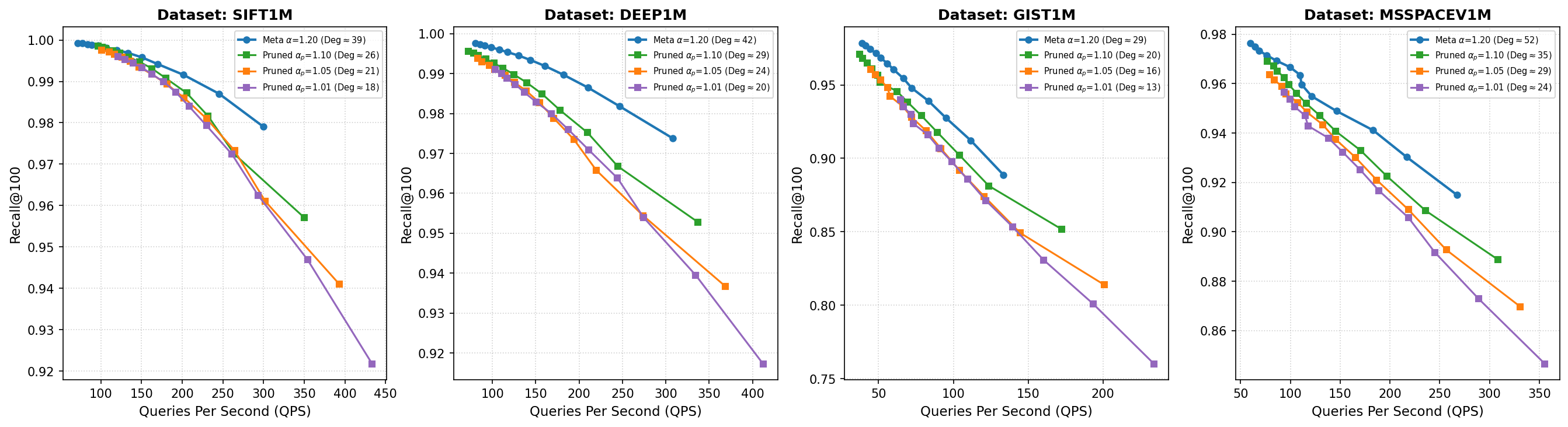}}
    \caption{Recall-QPS trade-off frontiers achieved by a base DiskANN graph of $\alpha_1 = 1.2$ (blue curves with circles) and pruned graphs (curves with squares) from the base graph of $\alpha_2 = 1.1, 1.05, 1.01$. Average degrees (Deg) of individual graphs are also included.}
    \label{fig:tuning}
  \end{center}
\end{figure}

\begin{figure}[!t]
  \begin{center}    \centerline{\includegraphics[width=\textwidth]{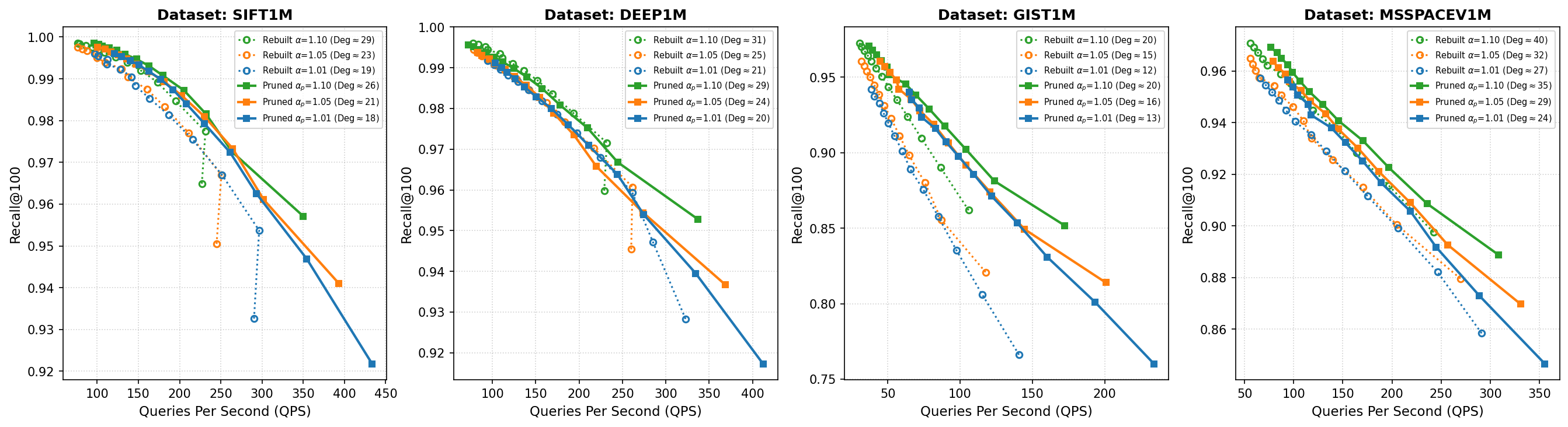}}
    \caption{Compare Recall-QPS performance of pruned (solid curves) and rebuilt (dashed curves) DiskANN indices. Pruned and rebuilt indices of the same $\alpha$ value share a color. Average degrees (Deg) of individual graphs are also included.}
    \label{fig:comparing}
  \end{center}
\end{figure}

We also investigate the performance of individual indices produced by $\rpt$ and rebuild during the tuning experiments, in order to obtain a better understanding of the effects of $\rpt$. 

Figure~\ref{fig:tuning} illustrates the performance of indices returned by $\rpt$ compared to the base indices. Starting with a base DiskANN index constructed with $\alpha=1.2$, we apply progressive pruning using $\alpha \in \{1.1, 1.05, 1.01\}$. The results demonstrate a clear trade-off between retrieval accuracy and query throughput: reducing $\alpha$ increases graph sparsity, which yields higher Queries Per Second (QPS) at the cost of marginal drops in Recall@100. While the denser graph excels in high-recall regimes, the sparser variants significantly reduce memory overhead and extend the maximum QPS, making them highly effective for resource-constrained local deployments or latency-critical applications. The consistent deviation of the pruned index curves from the baseline frontier suggests that $\rpt$ preserves the structural integrity of the base graph during pruning.

Figure~\ref{fig:comparing} compares the performance of indices pruned via \rpt\ against those rebuilt from scratch using identical $\alpha$ settings. While Theorem~\ref{thm:sorted} indicates that pruned and rebuilt indices possess different theoretical worst-case guarantees, this comparison serves to evaluate how effectively pruning approximates the structure of a graph constructed directly with the target sparsity. 

Perhaps surprisingly, even if an initially $\alpha_1$-reachable pruned by $\rpt$ with $\alpha_2$ has less-than-$\alpha_2$ reachability or even vacuous reachability (i.e., the reachability could be less than $1$ in the worst case) in the worst case as shown in the last section, for Vamana heuristics and real-world datasets, we see that, across all four datasets, indices pruned via \rpt\ consistently exhibit a superior QPS-recall trade-off compared to their rebuilt counterparts at the same $\alpha$. This performance gap stems from the initialization quality: \rpt\ operates on a base graph ($\alpha=1.2$) characterized by high connectivity and reachability, preserving more critical edges. Conversely, rebuilding from scratch with a small $\alpha$ constrains the graph's connectivity and reachability throughout the construction process, hindering the identification of optimal edges necessary for efficient navigation. 

The structural deficiency of the rebuilt indices is particularly evident in SIFT1M for $\alpha = 1.01, 1.05, 1.10$. In these cases, we observe a counter-intuitive phenomenon where increasing the search effort (beam size) reduces both QPS and recall simultaneously. This indicates that due to poor graph quality, a larger search beam size causes the query traversal to be misguided into local optima or irrelevant subgraphs rather than converging on the true nearest neighbors. 

\section{Related Work}

Approximate nearest-neighbor index structures are core to every vector database system, and their importance has motivated recent NeurIPS challenges~\cite{simhadri2022results,simhadri_results_2024}.  Early real-world systems leveraged the FAISS library~\cite{douze_faiss_2024}, often using its implementation of the Hierarchical Navigable Small Worlds~\cite{malkov_efficient_2018} (HNSW) algorithm. However, few theoretical properties are known about HNSW, and it targets scenarios where the entire index fits into RAM.  Other techniques more appropriate to disk-based storage, such as ANNOY~\cite{noauthor_spotifyannoy_2026} (based on random hyperplane splits into memory-mapped files), and inverted-list-based methods, such as SPANN~\cite{chen_spann_2021}, are not well understood theoretically. DiskANN~\cite{10.1145/3543507.3583552,NEURIPS2019_09853c7f, adams2025distributedannefficientscalingsingle} remains a commonly used, state-of-the-art method for disk-based indexing. Its graph-based approach has been studied theoretically~\cite{indyk2023worstcaseperformancepopularapproximate,gollapudi2025sort}.  Our work builds upon the formal understanding of DiskANN, but focuses on leveraging the theoretical insights to obtain practical index tuning.

\bibliographystyle{plainnat}
\bibliography{main}

@inproceedings{NEURIPS2019_09853c7f,
 author = {Jayaram Subramanya, Suhas and Devvrit, Fnu and Simhadri, Harsha Vardhan and Krishnawamy, Ravishankar and Kadekodi, Rohan},
 booktitle = {Advances in Neural Information Processing Systems},
 editor = {H. Wallach and H. Larochelle and A. Beygelzimer and F. d\textquotesingle Alch\'{e}-Buc and E. Fox and R. Garnett},
 pages = {},
 publisher = {Curran Associates, Inc.},
 title = {DiskANN: Fast Accurate Billion-point Nearest Neighbor Search on a Single Node},
 url = {https://proceedings.neurips.cc/paper_files/paper/2019/file/09853c7fb1d3f8ee67a61b6bf4a7f8e6-Paper.pdf},
 volume = {32},
 year = {2019}
}

@misc{indyk2023worstcaseperformancepopularapproximate,
      title={Worst-case Performance of Popular Approximate Nearest Neighbor Search Implementations: Guarantees and Limitations}, 
      author={Piotr Indyk and Haike Xu},
      year={2023},
      eprint={2310.19126},
      archivePrefix={arXiv},
      primaryClass={cs.DS},
      url={https://arxiv.org/abs/2310.19126}, 
}

@inproceedings{
gollapudi2025sort,
title={Sort Before You Prune: Improved Worst-Case Guarantees of the Disk{ANN} Family of Graphs},
author={Siddharth Gollapudi and Ravishankar Krishnaswamy and Kirankumar Shiragur and Harsh Wardhan},
booktitle={Forty-second International Conference on Machine Learning},
year={2025},
url={https://openreview.net/forum?id=JnXbUKtLzz}
}

@article{jegou2011product,
  title={Product quantization for nearest neighbor search},
  author={J{\'e}gou, Herv{\'e} and Douze, Matthijs and Schmid, Cordelia},
  journal={IEEE Transactions on Pattern Analysis and Machine Intelligence},
  volume={33},
  number={1},
  pages={117--128},
  year={2011},
  publisher={IEEE}
}

@inproceedings{babenko2016efficient,
  title={Efficient indexing of billion-scale datasets of deep descriptors},
  author={Babenko, Artem and Lempitsky, Victor},
  booktitle={Proceedings of the IEEE Conference on Computer Vision and Pattern Recognition},
  pages={2055--2063},
  year={2016}
}

@inproceedings{simhadri2022results,
  title={Results of the {NeurIPS}'21 Challenge on Billion-Scale Approximate Nearest Neighbor Search},
  author={Simhadri, Harsha Vardhan and Williams, George and Aum{\"u}ller, Martin and Douze, Matthijs and Babenko, Artem and Baranchuk, Dmitry and Chen, Qi and Hosseini, Lucas and Krishnaswamy, Ravishankar and Srinivasa, Gopal and others},
  booktitle={Proceedings of the NeurIPS 2021 Competitions and Demonstrations Track},
  volume={176},
  pages={177--189},
  year={2022},
  organization={PMLR}
}

@article{malkov_efficient_2018,
    title = {Efficient and robust approximate nearest neighbor search using hierarchical navigable small world graphs},
    volume = {42},
    number = {4},
    journal = {IEEE transactions on pattern analysis and machine intelligence},
    author = {Malkov, Yu A and Yashunin, Dmitry A},
    year = {2018},
    pages = {824--836},
}

@article{gao_retrieval-augmented_2023,
    title = {Retrieval-{Augmented} {Generation} for {Large} {Language} {Models}: {A} {Survey}},
    volume = {abs/2312.10997},
    url = {https://api.semanticscholar.org/CorpusID:266359151},
    journal = {ArXiv},
    author = {Gao, Yunfan and Xiong, Yun and Gao, Xinyu and Jia, Kangxiang and Pan, Jinliu and Bi, Yuxi and Dai, Yi and Sun, Jiawei and Guo, Qianyu and Wang, Meng and Wang, Haofen},
    year = {2023},
}

@article{zeakis_pre-trained_2023,
    title = {Pre-trained embeddings for entity resolution: an experimental analysis},
    volume = {16},
    number = {9},
    journal = {Proceedings of the VLDB Endowment},
    author = {Zeakis, Alexandros and Papadakis, George and Skoutas, Dimitrios and Koubarakis, Manolis},
    year = {2023},
    pages = {2225--2238},
}

@inproceedings{wang_milvus_2021,
    address = {Virtual Event China},
    title = {Milvus: {A} {Purpose}-{Built} {Vector} {Data} {Management} {System}},
    isbn = {9781450383431},
    shorttitle = {Milvus},
    url = {https://dl.acm.org/doi/10.1145/3448016.3457550},
    doi = {10.1145/3448016.3457550},
    language = {en},
    urldate = {2026-01-24},
    booktitle = {Proceedings of the 2021 {International} {Conference} on {Management} of {Data}},
    publisher = {ACM},
    author = {Wang, Jianguo and Yi, Xiaomeng and Guo, Rentong and Jin, Hai and Xu, Peng and Li, Shengjun and Wang, Xiangyu and Guo, Xiangzhou and Li, Chengming and Xu, Xiaohai and Yu, Kun and Yuan, Yuxing and Zou, Yinghao and Long, Jiquan and Cai, Yudong and Li, Zhenxiang and Zhang, Zhifeng and Mo, Yihua and Gu, Jun and Jiang, Ruiyi and Wei, Yi and Xie, Charles},
    month = jun,
    year = {2021},
    pages = {2614--2627},
}

@misc{upreti_cost-effective_2025,
    title = {Cost-{Effective}, {Low} {Latency} {Vector} {Search} with {Azure} {Cosmos} {DB}},
    url = {http://arxiv.org/abs/2505.05885},
    doi = {10.48550/arXiv.2505.05885},
    urldate = {2026-01-24},
    publisher = {arXiv},
    author = {Upreti, Nitish and Simhadri, Harsha Vardhan and Sundar, Hari Sudan and Sundaram, Krishnan and Boshra, Samer and Perumalswamy, Balachandar and Atri, Shivam and Chisholm, Martin and Singh, Revti Raman and Yang, Greg and Hass, Tamara and Dudhey, Nitesh and Pattipaka, Subramanyam and Hildebrand, Mark and Manohar, Magdalen and Moffitt, Jack and Xu, Haiyang and Datha, Naren and Gupta, Suryansh and Krishnaswamy, Ravishankar and Gupta, Prashant and Sahu, Abhishek and Varada, Hemeswari and Barthwal, Sudhanshu and Mor, Ritika and Codella, James and Cooper, Shaun and Pilch, Kevin and Moreno, Simon and Kataria, Aayush and Kulkarni, Santosh and Deshpande, Neil and Sagare, Amar and Billa, Dinesh and Fu, Zishan and Vishal, Vipul},
    month = jul,
    year = {2025},
    note = {arXiv:2505.05885},
}

@inproceedings{10.1145/3543507.3583552,
author = {Gollapudi, Siddharth and Karia, Neel and Sivashankar, Varun and Krishnaswamy, Ravishankar and Begwani, Nikit and Raz, Swapnil and Lin, Yiyong and Zhang, Yin and Mahapatro, Neelam and Srinivasan, Premkumar and Singh, Amit and Simhadri, Harsha Vardhan},
title = {Filtered-DiskANN: Graph Algorithms for Approximate Nearest Neighbor Search with Filters},
year = {2023},
isbn = {9781450394161},
publisher = {Association for Computing Machinery},
address = {New York, NY, USA},
url = {https://doi.org/10.1145/3543507.3583552},
doi = {10.1145/3543507.3583552},
booktitle = {Proceedings of the ACM Web Conference 2023},
pages = {3406–3416},
numpages = {11},
location = {Austin, TX, USA},
series = {WWW '23}
}

@misc{adams2025distributedannefficientscalingsingle,
      title={DISTRIBUTEDANN: Efficient Scaling of a Single DISKANN Graph Across Thousands of Computers}, 
      author={Philip Adams and Menghao Li and Shi Zhang and Li Tan and Qi Chen and Mingqin Li and Zengzhong Li and Knut Risvik and Harsha Vardhan Simhadri},
      year={2025},
      eprint={2509.06046},
      archivePrefix={arXiv},
      primaryClass={cs.DC},
      url={https://arxiv.org/abs/2509.06046}, 
}

@misc{jaiswal2022ooddiskannefficientscalablegraph,
      title={OOD-DiskANN: Efficient and Scalable Graph ANNS for Out-of-Distribution Queries}, 
      author={Shikhar Jaiswal and Ravishankar Krishnaswamy and Ankit Garg and Harsha Vardhan Simhadri and Sheshansh Agrawal},
      year={2022},
      eprint={2211.12850},
      archivePrefix={arXiv},
      primaryClass={cs.LG},
      url={https://arxiv.org/abs/2211.12850}, 
}

@article{douze_faiss_2024,
    title = {The {FAISS} library},
    journal = {arXiv preprint arXiv:2401.08281},
    author = {Douze, Matthijs and Guzhva, Alexandr and Deng, Chengqi and Johnson, Jeff and Szilvasy, Gergely and Mazaré, Pierre-Emmanuel and Lomeli, Maria and Hosseini, Lucas and Jégou, Hervé},
    year = {2024},
}

@misc{simhadri_results_2024,
    title = {Results of the {Big} {ANN}: {NeurIPS}'23 competition},
    shorttitle = {Results of the {Big} {ANN}},
    url = {http://arxiv.org/abs/2409.17424},
    doi = {10.48550/arXiv.2409.17424},
    urldate = {2026-01-29},
    publisher = {arXiv},
    author = {Simhadri, Harsha Vardhan and Aumüller, Martin and Ingber, Amir and Douze, Matthijs and Williams, George and Manohar, Magdalen Dobson and Baranchuk, Dmitry and Liberty, Edo and Liu, Frank and Landrum, Ben and Karjikar, Mazin and Dhulipala, Laxman and Chen, Meng and Chen, Yue and Ma, Rui and Zhang, Kai and Cai, Yuzheng and Shi, Jiayang and Chen, Yizhuo and Zheng, Weiguo and Wan, Zihao and Yin, Jie and Huang, Ben},
    month = sep,
    year = {2024},
    note = {arXiv:2409.17424},
}

@misc{noauthor_spotifyannoy_2026,
    title = {spotify/annoy},
    copyright = {Apache-2.0},
    url = {https://github.com/spotify/annoy},
    urldate = {2026-01-29},
    publisher = {Spotify},
    month = jan,
    year = {2026},
    note = {original-date: 2013-04-01T20:29:40Z},
}

@inproceedings{chen_spann_2021,
    title = {{SPANN}: highly-efficient billion-scale approximate nearest neighbor search},
    isbn = {9781713845393},
    booktitle = {Proceedings of the 35th {International} {Conference} on {Neural} {Information} {Processing} {Systems}},
    publisher = {Curran Associates Inc.},
    author = {Chen, Qi and Zhao, Bing and Wang, Haidong and Li, Mingqin and Liu, Chuanjie and Li, Zengzhong and Yang, Mao and Wang, Jingdong},
    year = {2021},
    pages = {Article 398},
}

\newpage
\appendix

\section{Proof of \Cref{lem:sorted-euclid-opt}}
For convenience, we restate the optimization problem here:
\begin{equation*}
\begin{aligned}
\max_{x,y,z\in\mathbb{R}^d}\quad & \|x-z\|_2 \\
\text{s.t.}\quad
& \|y-z\|_2 \le \frac{\|z\|_2}{\alpha_1},\\
& \|x-y\|_2 \le \frac{\|y\|_2}{\alpha_2},\\
& \|x\|_2 \le \|y\|_2 \le \|z\|_2 = 1
\end{aligned}
\end{equation*}

Let $\opt$ denote the optimal value of the stated problem, and define
\[
\beta(\alpha_1,\alpha_2)
:=\frac{1}{\alpha_1}\sqrt{1-\frac{1}{4\alpha_2^2}}+\frac{1}{\alpha_2}\sqrt{1-\frac{1}{4\alpha_1^2}}.
\]
We prove $\opt=\beta(\alpha_1,\alpha_2)$.

For convenience of notation, we will let $\|v\|$ denote the $\ell_2$-norm of $v$ in this proof.

\paragraph{Upper bound.} Let $f(x,y,z) = \|x-z\|^2$. By monotonicity of $\sqrt{\cdot}$, it suffices to upper bound the optimum of $f$ by $\beta(\alpha_1,\alpha_2)^2$. We relax the constraint $\|z\|=1$ to $\|z\|\le 1$ and rewrite all constraints as quadratic:\begin{equation*}
    \begin{aligned}
    & \underset{x, y, z\in\R^d}{\max}
    & & f(x,y,z) := \|x-z\|^2 \\
    & \text{subject to}
    & & g_1(x,y,z) := \|x\|^2 - \|y\|^2 \le 0 \\
    & & & g_2(x,y,z) := \|y\|^2 - \|z\|^2 \le 0 \\
    & & & g_3(x,y,z) := \|y-z\|^2 - \|z\|^2 / \alpha_1^2 \le 0 \\
    & & & g_4(x,y,z) := \|x-y\|^2 - \|y\|^2 / \alpha_2^2 \le 0 \\
    & & & g_5(x,y,z) := \|z\|^2 - 1 \le 0 \\
    \end{aligned}
    \end{equation*}

    For multipliers $\lambda_1,\ldots,\lambda_5\ge 0$, define the Lagrangian
    \[
    L(x,y,z;\lambda):=f(x,y,z)-\sum_{i=1}^5 \lambda_i g_i(x,y,z).
    \]
    Then, for any feasible $x,y,z\in\R^d$, one has $L(x,y,z;\,\lambda) \geq f(x,y,z)$, meaning the optimum value of $f$ is at most \[\sup_{x,y,z\in\R^d} L(x,y,z;\,\lambda).\]

    We will now specify our settings of $\lambda_1,\ldots,\lambda_5$ as follows. For $i\in\{1,2\}$, set \[\theta_i := \sin^{-1}\del{\frac{1}{2\alpha_i}}\in(0,\pi/6], \quad\quad h_i := \sin(2\theta_i), \quad\quad h_{12} := \sin(2(\theta_1+\theta_2)). \] Note that $1/\alpha_i^2 = 4\sin^2\theta_i = 2(1-\cos 2\theta_i)$, and also $\beta(\alpha_1,\alpha_2) = 2\sin(\theta_1+\theta_2)$. Then, define \[\lambda_1 = \frac{h_1+h_2-h_{12}}{h_2} \quad\quad \lambda_2 = \frac{h_1+h_2-h_{12}}{h_1} \quad\quad \lambda_3 = \frac{h_{12}}{h_1}\quad\quad \lambda_4 = \frac{h_{12}}{h_2}\quad\quad \lambda_5 = \beta(\alpha_1,\alpha_2)^2\]

    These are all nonnegative since $h_1,h_2,h_{12}>0$ and
$h_{12}=\sin(2\theta_1+2\theta_2)\le \sin(2\theta_1)+\sin(2\theta_2)=h_1+h_2$.

To show that $L(x,y,z;\,\lambda)\leq \beta(\alpha_1,\alpha_2)^2$, we claim that \begin{equation*}
  L(x,y,z;\,\lambda) = \beta(\alpha_1,\alpha_2)^2 - \|ax+by+cz\|^2   \tag{$\ast$} \label{eq:duality-eq}
\end{equation*}

    for appropriately chosen scalars $a,b,c\in\R$: \[a = \sqrt{h_1/h_2}\quad\quad c = \sqrt{h_2/h_1}\quad\quad b=-\frac{h_{12}}{\sqrt{h_1h_2}}.\]

It remains to verify \eqref{eq:duality-eq} by matching coefficients.
Expand
\[
\|ax+by+cz\|^2
=a^2\|x\|^2+b^2\|y\|^2+c^2\|z\|^2
+2ab\ip{x}{y}+2ac\ip{x}{z}+2bc\ip{y}{z}.
\]
\begin{itemize}
\item Constant term: $L$ has constant term $\lambda_5=\beta(\alpha_1,\alpha_2)^2$, matching the RHS.
\item Coefficient of $\ip{x}{y}$: $L$ has coefficient $2\lambda_4$, and the RHS has coefficient $-2ab$.
With $a=\sqrt{h_1/h_2}$ and $b=-h_{12}/\sqrt{h_1h_2}$, we get $-2ab=2h_{12}/h_2=2\lambda_4$.
\item Coefficient of $\ip{y}{z}$: $L$ has coefficient $2\lambda_3$, and the RHS has coefficient $-2bc$.
With $c=\sqrt{h_2/h_1}$, we get $-2bc=2h_{12}/h_1=2\lambda_3$.
\item Coefficient of $\ip{x}{z}$: $L$ has coefficient $-2$, and the RHS has coefficient $-2ac$.
Since $ac=\sqrt{h_1/h_2}\sqrt{h_2/h_1}=1$, these match.
\item Coefficient of $\|x\|_2^2$: $L$ has coefficient $1-\lambda_1-\lambda_4$.
Using $\lambda_1+\lambda_4=\frac{h_1+h_2}{h_2}$, we have
$1-\lambda_1-\lambda_4=1-\frac{h_1+h_2}{h_2}=-\frac{h_1}{h_2}=-a^2$.
\item Coefficient of $\|y\|^2$:
we must show
\[
\lambda_1-\lambda_2-\lambda_3-\lambda_4+\lambda_4/\alpha_2^2=-\frac{h_{12}^2}{h_1h_2}.
\]
Substituting the definitions of $\lambda_i$ and using $1/\alpha_2^2=2(1-\cos(2\theta_2))$ gives
\begin{align*}
\lambda_1-\lambda_2-\lambda_3-\lambda_4+\lambda_4/\alpha_2^2
&=\frac{h_1+h_2-h_{12}}{h_2}-\frac{h_1+h_2-h_{12}}{h_1}-\frac{h_{12}}{h_1}-\frac{h_{12}}{h_2}+\frac{h_{12}}{h_2\alpha_2^2}\\
&=\frac{h_1+h_2-2h_{12}+h_{12}/\alpha_2^2}{h_2}-\frac{h_2}{h_1}\\
&=\frac{h_1+h_2-2h_{12}\cos(2\theta_2)}{h_2}-\frac{h_2}{h_1}.
\end{align*}
Thus it suffices to show
\[
h_1^2+h_2^2-2h_1h_{12}\cos(2\theta_2)=h_1h_2+h_{12}^2.
\]
Equivalently,
\[
h_2^2=h_1^2+h_{12}^2-2h_1h_{12}\cos(2\theta_2),
\]
which follows from the identity $h_{12}=h_1\cos(2\theta_2)+h_2\cos(2\theta_1)$:
\begin{align*}
h_2^2
&=h_2^2\cos^2(2\theta_1)+h_2^2\sin^2(2\theta_1)\\
&=(h_{12}-h_1\cos(2\theta_2))^2+h_2^2\sin^2(2\theta_1)\\
&=h_{12}^2+h_1^2\cos^2(2\theta_2)-2h_1h_{12}\cos(2\theta_2)+h_2^2\sin^2(2\theta_1)\\
&=h_{12}^2+h_1^2-2h_1h_{12}\cos(2\theta_2),
\end{align*}
as claimed. Therefore the coefficient of $\|y\|^2$ matches $-b^2=-h_{12}^2/(h_1h_2)$.

\item Coefficient of $\|z\|^2$:
we show that
\[
1+\lambda_2-\lambda_3+\lambda_3/\alpha_1^2-\lambda_5=-\frac{h_2}{h_1}.
\]
Substituting $\lambda_2=\frac{h_1+h_2-h_{12}}{h_1}$ and $\lambda_3=\frac{h_{12}}{h_1}$ and using $1/\alpha_1^2=2(1-\cos(2\theta_1))$ yields
\begin{align*}
1+\lambda_2-\lambda_3+\lambda_3/\alpha_1^2-\lambda_5
&=1+\frac{h_1+h_2-h_{12}}{h_1}-\frac{h_{12}}{h_1}+\frac{h_{12}}{h_1\alpha_1^2}-\lambda_5\\
&=2+\frac{h_2-2h_{12}\cos(2\theta_1)}{h_1}-\lambda_5.
\end{align*}
It therefore suffices to check that
\[
\lambda_5=\beta(\alpha_1,\alpha_2)^2
=2+\frac{2h_2-2h_{12}\cos(2\theta_1)}{h_1}.
\]
Using $h_{12}=\sin(2(\theta_1+\theta_2))$ and $h_1=\sin(2\theta_1)$, $h_2=\sin(2\theta_2)$, we compute
\begin{align*}
2+\frac{2h_2-2h_{12}\cos(2\theta_1)}{h_1}
&=2+\frac{2\sin(2\theta_2)-2\sin(2\theta_1+2\theta_2)\cos(2\theta_1)}{\sin(2\theta_1)}\\
&=2-2\cos(2\theta_1+2\theta_2)\\
&=(2\sin(\theta_1+\theta_2))^2\\
&=\beta(\alpha_1,\alpha_2)^2
=\lambda_5,
\end{align*}
where the second equality uses $\sin(2\theta_1+2\theta_2)=\sin(2\theta_1)\cos(2\theta_2)+\cos(2\theta_1)\sin(2\theta_2)$.
Substituting back gives $1+\lambda_2-\lambda_3+\lambda_3/\alpha_1^2-\lambda_5=-h_2/h_1=-c^2$.
\end{itemize}

This completes the proof that $\opt \le \beta(\alpha_1,\alpha_2)$.

\paragraph{Lower bound.}
We exhibit a solution $x,y,z\in\R^d$ achieving value $\beta(\alpha_1,\alpha_2)$.
Let $z=(1,0)$, let $y=(\cos(2\theta_1),\sin(2\theta_1))$, and let
$x=(\cos(2(\theta_1+\theta_2)),\sin(2(\theta_1+\theta_2)))$,
where $\theta_i=\sin^{-1}(1/(2\alpha_i))$ as above. Then
$\|x\|_2=\|y\|_2=\|z\|_2=1$, so $\|x\|_2\le \|y\|_2\le \|z\|_2=1$.
Moreover,
\[
\|y-z\|_2=2\sin(\theta_1)=\frac{1}{\alpha_1},\qquad
\|x-y\|_2=2\sin(\theta_2)=\frac{1}{\alpha_2},
\]
so all constraints are satisfied with equality. Finally,
\[
\|x-z\|_2=2\sin(\theta_1+\theta_2)=\beta(\alpha_1,\alpha_2),
\]
so $\opt\ge \beta(\alpha_1,\alpha_2)$. 

We conclude that $\opt = \beta(\alpha_1,\alpha_2)$.

\end{document}